\newtheorem{theorem}{Theorem}[section]
\newtheorem{proposition}[theorem]{Proposition}
\newtheorem{lemma}[theorem]{Lemma}
\theoremstyle{definition}
\newtheorem{definition}[theorem]{Definition}
\newtheorem{remark}[theorem]{Remark}
\DeclareMathOperator*{\argmin}{arg\,min}
\renewcommand{\subset}{\subseteq}
\renewcommand{\hat}{\widehat}
\renewcommand{\tilde}{\widetilde}
\renewcommand{\epsilon}{\varepsilon}
\renewcommand{\Re}{{\rm Re}}
\renewcommand{\Im}{{\rm Im}}
\newcommand{\commentout}[1]{}
\def\sign{\text{sign}}
\def\supp{\text{supp}}
\def\Lip{{\rm Lip}}
\def\TV{{\rm TV}}
\def\<{\langle}
\def\>{\rangle}
\def\({\Big(}
\def\){\Big)}
\def\diag{\text{diag}}
\def\calA{\mathcal{A}}
\def\C{\mathbb{C}}
\def\calE{\mathcal{E}}
\def\calF{\mathcal{F}}
\def\calZ{\mathcal{Z}}
\def\calI{\mathcal{I}}
\def\calM{\mathcal{M}}
\def\calIM{\mathcal{I}^M}
\def\calN{\mathcal{N}}
\def\R{\mathbb{R}}
\def\T{\mathbb{T}}
\def\vol{\text{vol}}
\def\Z{\mathbb{Z}}
\numberwithin{equation}{section}
\begin{document}

\title{Quantization for spectral super-resolution}

\author{C.~Sinan G\"unt\"urk\thanks{Courant Institute of Mathematical Sciences, New York University. Email: gunturk@cims.nyu.edu} \and Weilin Li\thanks{Courant Institute of Mathematical Sciences, New York University. Email: weilinli@cims.nyu.edu}}

\maketitle

\begin{abstract}
We show that the method of distributed noise-shaping beta-quantization offers superior performance for the problem of spectral super-resolution with quantization whenever there is redundancy in the number of measurements. More precisely, we define the oversampling ratio $\lambda$ as the largest integer such that $\lfloor M/\lambda\rfloor - 1\geq 4/\Delta$, where $M$ denotes the number of Fourier measurements and $\Delta$ is the minimum separation distance associated with the atomic measure to be resolved. We prove that for any number $K\geq 2$ of quantization levels available for the real and imaginary parts of the measurements, our quantization method combined with either TV-min/BLASSO or ESPRIT guarantees reconstruction accuracy of order $O(M^{1/4}\lambda^{5/4} K^{- \lambda/2})$ and $O(M^{3/2} \lambda^{1/2} K^{- \lambda})$ respectively, where the implicit constants  are independent of $M$, $K$ and $\lambda$. In contrast, naive rounding or memoryless scalar quantization for the same alphabet offers a guarantee of order $O(M^{-1}K^{-1})$ only, regardless of the reconstruction algorithm.
\end{abstract}

\noindent {\bf Keywords:} Quantization, super-resolution, spectral estimation, total variation, ESPRIT

\medskip 

\noindent {\bf MSC2020:} 94A12, 94A20

\section{Introduction}

Analog-to-digital conversion is inherently lossy. It typically consists of two stages, {\it sampling} and {\it quantization}. The sampling stage produces a stream of scalar samples and the quantization stage replaces each sample with an element of a discrete set, called the {\it (quantization) alphabet}. Ideally the sampling stage is lossless (or negligible) so that the distortion is only (or primarily) caused by quantization. In many applications, one is tasked with designing an analog-to-digital conversion scheme that minimizes the rate-distortion or reconstruction error. 

The naive approach to quantization is to round each scalar measurement to the nearest available level in the quantization alphabet, which is called  {\it memoryless scalar quantization} (MSQ). While MSQ is simple to implement in hardware and is suitable to be used with robust recovery algorithms in the high-resolution regime, its rate-distortion performance is suboptimal when the sampling map is redundant, i.e. when more measurements are collected compared to the minimal number needed for perfect reconstruction. The reason for this suboptimality is simple: the dimensionality of the manifold on which the measurements lie does not increase with the number of measurements once it exceeds this critical value, and therefore, the process of rounding these measurements to the nearest lattice points, i.e. MSQ, becomes wasteful as most lattice points are never utilized.

More efficient quantization methods achieve improved rate-distortion performance by utilizing some (or all) of these lattice points that are missed by MSQ. {\it Noise-shaping quantizers} such as {\it $\Sigma\Delta$ modulation} and {\it beta-quantization} fall into this category. The performance of $\Sigma\Delta$ has been extensively analyzed in the context of quantization for band-limited functions (e.g. \cite{daubechies2003approximating,gunturk2003one,gunturk2004approximating,deift2011optimal}), finite frame coefficients (e.g. \cite{benedetto2006sigma,benedetto2006second,wang2018sigma}), and compressive sampling (e.g. \cite{gunturk2013sobolev,saab2018quantization}). Beta-quantization is more recent and was applied to random Gaussian frames (e.g. \cite{chou2013beta,chou2016distributed}), harmonic Fourier frames (e.g. \cite{chou2017distributed,chou2015noise}), and fast binary embeddings (e.g. \cite{huynh2020fast}). This body of papers indicates that for many signal and data processing tasks, noise-shaping quantization is superior to that of MSQ. 

Extending this progression of work, we seek to use noise-shaping quantization for spectral super-resolution. Let $\calM(\T)$ be the collection of complex-valued atomic measures supported in $\T:=[0,1)$. Any $\mu\in \calM(\T)$ with at most $S$ atoms can be written in the form 
\begin{equation*}
	\mu :=\sum_{j=1}^S a_j\delta_{t_j},
	\quad a:=\big( a_j\big)_1^S \in\C^S,
	\quad T:=\{t_j\}_{j=1}^S\subset\T. 
\end{equation*}
For a fixed integer $M>0$, let $\calF_M$ be the operator which maps the measure $\mu$ to its first $M$ Fourier coefficients, where the $k$-th  coefficient is defined to be
\begin{equation*}
	\hat\mu(k)
	:=\int_0^1 e^{-2\pi ik t}\ d\mu(t)
	=\sum_{j=1}^S a_j e^{-2\pi i k t_j}. 
\end{equation*}
The {\it super-resolution problem} is to recover $\mu$ from noisy observations of $\calF_M\mu$. 

\begin{figure}
	\centering
	\begin{tikzpicture}[scale=1.3]
	\draw (-1,1) node {$\mu$}; 
	\draw[thick,->] (-.75,1) -- (-.25,1);
	\draw[thick] (0,0) -- (2,0) -- (2,2) -- (0,2) -- (0,0);
	\draw (1,1.5) node[anchor=north] {Sampling}; 
	\draw (1,1) node[anchor=north] {$\calF_M$};
	\draw[thick,->] (2.25,1) -- (3.25,1); 
	\draw[thick] (3.5,0) -- (5.5,0) -- (5.5,2) -- (3.5,2) -- (3.5,0);
	\draw (4.5,1.5) node[anchor=north] {Quantization}; 
	\draw (4.5,1) node[anchor=north] {$Q$}; 
	\draw[thick,->] (5.75,1) -- (6.75,1);
	\draw[thick] (7,0) -- (9,0) -- (9,2) -- (7,2) -- (7,0);
	\draw (8,1.5) node[anchor=north] {Reconstruction}; 
	\draw (8,1) node[anchor=north] {$D$}; 
	\draw[thick,->] (9.25,1) -- (9.75,1);
	\draw (10,1) node {$\tilde\mu$}; 
	\end{tikzpicture}
	\caption{For any unknown $\mu\in\calM$, the measurement system collects Fourier information $\calF_M\mu$, which are quantized to or encoded as $Q(\calF_M\mu)$. A decoder $D$ reconstructs an approximate measure $\tilde\mu$ from the quantized Fourier data.}
	\label{fig:prob}
\end{figure}
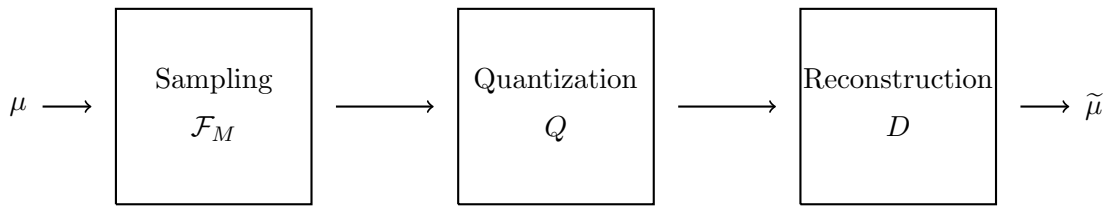

Let us informally describe the classes of measures we will consider. The support of $\mu$ plays an important role in the robustness of recovery. If it contains points that are too close to one another, recovery by some algorithm may still be possible, but small noise can lead to large reconstruction errors (e.g. \cite{donoho1992superresolution,demanet2015recoverability,li2021stable}). It is standard to define the minimum separation of an atomic $\nu$ by
\begin{equation}
\label{eq:minsep}
\Delta(\nu)
:=\min_{s\not=t,\, s,t\in\supp(\nu)} |s-t|_\T, 
\quad\text{where}\quad 
|t|_\T 
:=\min_{n \in \Z}~|t-n|, 
\end{equation}
and for a prescribed $\Delta>0$, we define a class of $\Delta$-separated measures 
\begin{equation*}
	\calM(\T,\Delta)
	:=\{\mu\in \calM(\T)\colon \Delta(\nu)\geq \Delta\}. 
\end{equation*}
All measures considered in this paper will belong to an appropriate subset of $\calM(\T,\Delta)$.

Each entry of the complex vector $y=\calF_M\mu$ must be quantized to an element of some complex alphabet $\calA$ before reconstruction can be done digitally. We consider Cartesian alphabets which are of the form $\calA_\R+i\calA_\R$ where $\calA_\R\subset\R$ contains $K\geq 2$ elements. Given a distortion function $\calE(\cdot,\cdot)$, and a class of atomic measures $\calM\subset \calM(\T,\Delta)$, the {\it (spectral) super-resolution quantization problem} is to select an alphabet $\calA\subset\C$ consisting of $K\times K$ elements, quantizer $Q\colon\C^M\to\calA^M$, and decoder $D\colon \calA^M\to \calM(\T)$ that minimizes the {\it rate-distortion}
\begin{equation*}
	\sup_{\mu\in\calM} \calE\big(\mu,\tilde\mu\big),
	\quad \text{where}\quad \tilde\mu:= D(Q(\calF_M \mu)). 
\end{equation*}
See Figure \ref{fig:prob} for an illustration of this setup. We will consider distortion functions $\calE(\cdot,\cdot)$ that include both amplitude and support errors, typically as a weighted sum. In this problem formulation, we have made the simplification that quantization is the only source of perturbation. 

The naive approach is to quantize the analog Fourier coefficients $y=\calF_M\mu$ via MSQ with a Cartesian alphabet of cardinality $K^2$ and feed them into an existing super-resolution algorithm. If $M\geq 4/\Delta(\mu)$, popular algorithms based on either convex optimization (e.g. TV-min/BLASSO) \cite{candes2013super,fernandez2013support,duval2015exact}  or subspace methods (e.g. ESPRIT) \cite{li2020super} can recover $\mu$ with error bounded by a universal constant times the $\ell^2$ norm of the quantization error, where the latter is bounded by $O(\sqrt{M}/ K)$. While the numerical evidence presented in this paper indicates that the $\sqrt M$ factor is artificial, it also shows that neither algorithm benefits from more samples beyond the $4/\Delta(\mu)$ threshold. It is important to mention that this inefficiency of MSQ exists in a wide array of problems besides super-resolution (e.g. \cite{daubechies2003approximating,gunturk2003one,gunturk2004approximating,gunturk2013sobolev}).

In this paper, we are primarily interested in the situation where a sampling/imaging device has already
collected more Fourier measurements than necessary. Our viewpoint
is to find what would be the best quantization algorithm given a set of measurements
and a fixed scalar quantizer. In this context, we advocate for and provide an alternative noise-shaping quantization, called beta-quantization, that is able to exploit the available redundancy (while using the same quantizer) and we establish rate-distortion bounds far superior to what is achievable with MSQ. More specifically, for fixed $\Delta>0$, assume there is an integer $\lambda\geq 1$ called the {\it over-sampling ratio} such that 
\begin{equation}
\label{eq:lambda}
\Big\lfloor \frac{M}{\lambda} \Big\rfloor \geq 1+\frac{4}{\Delta}.
\end{equation}
Hence, $\lambda$ roughly corresponds to ratio between the number of samples $M$ and $4/\Delta$. For an appropriate class of measures $\calM\subset\calM(\T,\Delta)$ and suitable distortion $\calE(\cdot,\cdot)$, we show that our beta-quantizer can be combined with modifications of existing super-resolution algorithms (TV-min/BLASSO, ESPRIT) to guarantee rate-distortion far surpassing that of MSQ, whenever there is modest over-sampling. The rate-distortion achieved by various combinations of encoders and decoders are summarized in Table \ref{table1}. 

\begin{table}[h]
	\centering
	{\tabulinesep=1.2mm
		\begin{tabu}{ | c | c | c  |}
			\hline	
			&beta-quantization &MSQ \\
			\hline
			TV-min/BLASSO  &$O(M^{1/4}\lambda^{5/4} K^{-\lambda/2})$ &$O(M^{1/2} K^{-1})$ \\
			\hline 
			ESPRIT  & $ O(M^{3/2} \lambda^{1/2} K^{-\lambda})$ & $O(M^{1/2} K^{-1})$ \\
			\hline 
	\end{tabu}}
	\caption{Reconstruction accuracy using either convex or subspace methods from either beta-quantized or memoryless scalar quantized measurements. }
	\label{table1}
\end{table}

\commentout{To put these results in context, we can compare them with an oracle bound. We provide an example of a $\mu\in\calM(\T,S)$ such that if we are given the support of $\mu$ for free, but its amplitudes must be recovered from its quantized Fourier coefficients, then the best possible reconstruction accuracy is $O(K^{-8\lambda})$. There is a significant gap between this oracle lower bound and the ones we establish, which is to be expected because the support estimation step is the most challenging part in super-resolution. }

\commentout{
More precisely, if the (integer) oversampling ratio $\lambda$ is such that 
\[
\lfloor M/\lambda\rfloor - 1\geq 4/\Delta,
\] 
then for any number $K$ of quantization levels, our quantization method guarantees reconstruction accuracy of order $O(\lambda^{3/2} K^{-\lambda/2})$, up to constants which are independent of $K$ and $\lambda$.} 

The remainder of this paper is organized as follows. Section \ref{sec:quan} is the core of this paper. It introduces necessary background on noise-shaping quantization, defines our novel quantizer, and outlines our general approach. Sections \ref{sec:convex} and \ref{sec:subspace} can be read independently from each other, and they combine our quantizer with decoders derived from either on convex optimization (TV-min/BLASSO) or subspace methods (ESPRIT), respectively. The main result for TV-min/BLASSO is presented in Theorem \ref{thm:quanconvex}, while the main result for ESPRIT is given in Theorem \ref{thm:quanesprit}. Section \ref{sec:principle} discusses the main ingredients for combining our quantizer with other robust super-resolution algorithms beyond the ones considered in this paper. Section \ref{sec:num} provides numerical simulations that validate our theorems and show that our method is successful even in the extreme scenario $K=2$. In Section \ref{sec:lower}, we provide lower bounds for MSQ. Theorem \ref{thm:MSQ2} shows that reconstruction from memoryless scalar quantized measurements using Cartesian quantization alphabets cannot achieve rate-distortion better than $O(M^{-1}K^{-1})$, regardless of which algorithm performs the reconstruction. 

To our best knowledge, this paper and its accompanying conference proceeding \cite{gunturk2019high} are the first to present an effective quantizer for spectral super-resolution that takes advantage of redundancy. Designing an effective quantization scheme has important practical implications since it is generally impossible to exactly recover the measure's support and amplitudes whenever there is any amount of noise, even due to small rounding errors. From a mathematical perspective, this paper develops new quantization techniques that we envision can be applied to other measurement systems and recovery methods.

\section{Proposed quantization method}
\label{sec:quan}

\subsection{Basic notation}
$\calM(\T)$ denotes the set of bounded discrete measures on $\T:=[0,1)$ and $\|\cdot\|_{\TV}$ is the total variation norm. The support of a measure $\mu$ is denoted $\supp(\mu)$. Recall that the minimum separation of a discrete measure is denoted $\Delta(\mu)$ and is defined in \eqref{eq:minsep}. For $1\leq p,q\leq\infty$, we let $\|\cdot\|_p$ be the $\ell^p$ norm of a vector and $\|\cdot\|_{p\to q}$ be the $\ell^p$ to $\ell^q$ operator norm. For any matrix $A$, we let $\sigma_k(A)$ denote the $k$-th largest singular value, $A^\dagger$ be the Moore-Penrose pseudo-inverse, and $A^t$ be its transpose. For any vector $u$, we let $u_j$ be its $j$-th entry in the standard coordinate system. Generally, we use $C_1,\dots, C_k$ to denote universal constants whose value may change throughout this paper.

\subsection{Review of noise-shaping for finite frames} 

In this section, we summarize some main ideas behind noise-shaping quantization of finite frame coefficients. Suppose $x\in\C^S$ is an unknown vector and $F\in\C^{M\times S}$ is a known frame, i.e. $F$ is injective, and we would like to quantize the frame coefficients $Fx\in\C^M$ to a vector $q\in\calA^M$ for which we can find an appropriate left inverse that yields good reconstruction.
\begin{enumerate}[(a)]
	\item 
	(Existence of a particular quantizer). Suppose the quantization is done so that there is a matrix $H\in\C^{M\times M}$ and vector $u\in\C^M$ such that $q-Fx=Hu$ and $\|u\|_\infty$ is bounded in dependent of $M$. This form for the quantizer can be traced by to earlier papers on noise-shaping quantization (e.g. \cite{gunturk2003one,gunturk2004approximating}). 
	\item 
	(Existence of a particular inverse). Since $F$ is a frame, there is no unique left inverse of $F$, and the number of choices is an increasing function of the over-sampling rate $M/S$. Suppose $V\in\C^{m\times M}$ is some matrix such that $VF$ is a frame for $\C^S$. Then $(VF)^\dagger V$ is a left inverse of $F$. This type of reconstruction was employed by previous papers (e.g. \cite{gunturk2013sobolev,chou2016distributed,blum2010sobolev}).	 
\end{enumerate}

The $\ell^2$ error of the reconstruction of $x$ via $(VF)^\dagger Vq$ is
\begin{equation}
	\label{eq:frame}
\|x-(VF)^\dagger Vq\|_2
=\|(VF)^\dagger V(Fx-q)\|_2
\leq \frac{\|VH\|_{\infty\to 2}\|u\|_\infty}{\sigma_S(VF)}.
\end{equation}
With this observation at hand, the strategy is clear: construct $H$ and $V$ so that the right hand side is small. 


\subsection{Abstract nonlinear noise-shaping quantization} 
\label{sec:abstract}

There are several aspects of the super-resolution quantization problem that are different from that of the finite frame one. First, although $y=\calF_M\mu$ is a linear combination of vectors $\{(e^{-2\pi ij t_k})_{j=1}^{M-1}\}_{k=1}^{S}$, the $\{t_k\}_{k=1}^S$ are unknown and depend on the measure $\mu$, which means we do not have access to the underlying frame matrix. Second, popular super-resolution algorithms are non-linear, which is in contrast to the finite frame setting where there exist multiple linear left inverses. 

For the aforementioned reasons, we must develop a more general noise-shaping quantization method. Since our approach can be applied to a multitude of problems, we explain our ideas in an abstract setting. Consider any pair $(\Phi,\Psi)$ where $\Phi\colon X\to \C^M$, $\Psi\colon \C^M\to X$, and $\Psi\circ \Phi=I$. It is important to mention that both $\Phi$ and $\Psi$ are allowed to be non-linear maps. In the context of super-resolution, $X$ is a suitable class of measures, $\Phi$ plays the role of $\calF_M$, and $\Psi$ is a non-linear reconstruction algorithm based either on convex optimization or subspace methods. 

Returning back to the abstract setting, $y=\Phi(x)$ represents the analog measurement vector and we must select an alphabet $\calA$ and quantizer $Q\colon \C^M\to\calA^M$ such that reconstruction from quantized measurements $q=Q(y)=Q(\Phi(x))$ remains as close to $x$ as possible uniformly over $x\in X$. Our proposed strategy relies on two crucial assumptions. 

\begin{enumerate}[(a)]
	\item 
	(Existence of a particular quantizer). Suppose there is a linear map $H\colon \C^M\to\C^M$ such that for all $x\in X$, there exist $q\in\calA$ and bounded $u\in \C^M$ satisfying the equation 
	\begin{equation}
	\label{eq:abstract1}
	y-q
	=\Phi(x)-q
	= Hu. 
	\end{equation}
	This defines a mapping $Q\colon \C^M\to\calA^M$ given by $\Phi(x)\mapsto q$.
	
	\item
	(Existence of a particular inverse). Let $\calE\colon X\times X\to\R$ be a distortion function that quantifies the error between two elements of $X$. Assume that there is a linear map $V\colon \C^M\to \C^m$, where $m\leq M$, such that $\Phi_V:=V\circ \Phi$ admits a robust left inverse $\Psi_V$ in the following sense: $\Psi_V\circ \Phi_V=I$ and there exist $C>0$ and $\alpha>0$ such that,
	\begin{equation}
	\label{eq:abstract2}
	\text{for all } x\in X \text{ and } \xi\in \C^m, \quad
	\calE(x,\Psi_V(\xi))
	\leq C\|\Phi_V(x)-\xi\|_2^\alpha. 
	\end{equation}
\end{enumerate}

The implicit constant $C$ in \eqref{eq:abstract2} serves the same role as $1/\sigma_S(VF)$ in the frame setting in inequality \eqref{eq:frame}. Under these two assumptions, we can easily establish that $\Psi_V\circ V$ is a robust decoder for $\Phi$. Indeed, by \eqref{eq:abstract1}, we have
\[
\Phi_V(x)-Vq=V\Phi(x)-Vq=V(y-q)=VHu,
\]
and according to \eqref{eq:abstract2}, we have 
\[
\calE(x,\Psi_V(Vq))
\leq C\|VHu\|_2^\alpha
\leq C \|VH\|_{\infty\to 2}^\alpha\|u\|_\infty^\alpha.  
\]
Thus, the rate-distortion for this scheme is largely controlled by $\|VH\|_{\infty\to 2}$. We will carry out an explicit construction for super-resolution in the next subsection.

\subsection{Beta-encoding for super-resolution} 
\label{sec:beta}

In this section, we formally define our proposed quantizer for super-resolution. Recall that the number of Fourier measurements $M$ and parameter $\Delta>0$ are fixed, and we assume there exists $\lambda\geq 1$ that satisfies inequality \eqref{eq:lambda}. Without loss of generality, we assume that $M$ is divisible by $\lambda$ and set $m := M/\lambda$ because if $M$ is not divisible by $\lambda$, then we simply discard the extra samples. 

We introduce a parameter $\beta>1$ that will be chosen later. Letting $I_m$ denote the $m\times m$ identity matrix, consider the $m\times M$ matrix $V:=V_{\beta}$ where
\begin{equation}
\label{eq:V}
V:=
\begin{bmatrix}
I_m &\beta^{-1}I_m &\cdots &\beta^{-\lambda+1} I_m
\end{bmatrix}. 
\end{equation}

One of the key properties that we exploit throughout this paper is the following algebraic identity (which holds for all measures, not necessarily atomic) specialized to an atomic $\mu$. For each $\ell=1,\dots,m$, we have
\begin{gather}
\label{eq:alg}
\begin{split}
(V\calF_M\mu)_\ell
&=\sum_{k=0}^{\lambda-1} \beta^{-k}\, \hat\mu(mk+\ell) \\
&=\sum_{k=0}^{\lambda-1} \beta^{-k} \int_0^1 e^{-2\pi i(mk+\ell)t} \ d\mu(t)
=\int_0^1 \( \sum_{k=0}^{\lambda-1} \beta^{-k} e^{-2\pi i mk t}\) e^{-2\pi i \ell t} \ d\mu(t). 
\end{split}
\end{gather}
This calculation motivates us to define the functions 
\begin{equation*}
	w(t)
	:=\sum_{k=0}^{\lambda-1} \beta^{-k} e^{-2\pi i k t}
	=\frac{1-\beta^{-\lambda}e^{-2\pi i \lambda t}}{1-\beta^{-1}e^{-2\pi i t}}
	\quad\text{and}\quad
	w_m(t):=w(mt). 
\end{equation*}
Note that $w_m$ is uniformly bounded above and below,
\begin{equation}
	\label{eq:weights}
	\frac{1}{C_\beta}
	\leq |w_m(t)|
	\leq C_\beta, 
	\quad\text{where}\quad
	C_\beta := \frac{1+\beta^{-1}}{1-\beta^{-1}}. 
\end{equation}
Letting $L_{\beta,\lambda}$ stand for the Lipschitz constant of $w(t)$, it can be shown that 
\begin{equation*}
	L_{\beta,\lambda}\leq \frac{4\pi\lambda\beta}{(\beta-1)^2}
	\quad\text{and}\quad 
	| w_m^{-1}(s) - w_m^{-1}(t) | \leq L_{\beta,\lambda}m\,  |s-t|_\T.
\end{equation*}

Following terminology in \cite{chou2016distributed}, we call $V\calF_M\mu\in\C^m$ the {\it condensed measurements}. With this notation at hand, the calculation \eqref{eq:alg} can be summarized as
\begin{equation*}
V\calF_M\mu = \calF_m (w_m\mu). 
\end{equation*}
Observe that $\mu$ and $w_m\mu$ have identical supports, but different amplitudes. 

Let us define $H:=H_\beta$ to be the $M\times M$ matrix where 
\begin{equation}
\label{eq:H}
H:=
\begin{bmatrix}
I_m \\
-\beta I_m &\ddots \\
&\ddots &I_m \\
& & -\beta I_m &I_m
\end{bmatrix}. 
\end{equation}
The following is a special case of \cite[Lemma 2]{chou2017distributed}:

\commentout{
\begin{equation}
H_{j,k}:=
\begin{cases} 
1, & \mbox{if }j=k, \\
-\beta, & \mbox{if }j = k{+}m \mbox{ and } 1 \leq k \leq M{-}m. 
\end{cases}
\end{equation}
}

\begin{lemma}
	Let $K\geq 2$ be an integer, and suppose the parameters $A,\beta,\delta>0$ satisfy the inequality
	\begin{equation}
	\label{eq:parameter}
	\beta + A\delta^{-1} \leq K, 
	\end{equation}
	and consider the quantization alphabet 
	\begin{equation}
	\label{eq:alphabetdelta}
	\calA 
	:= \calA_K
	:= \delta (\calZ_K+i\calZ_K),
	\end{equation}
	where $\calZ_K$ denotes the $K$-term origin-symmetric arithmetic progression of integers with spacing $2$, namely
	$$
	\calZ_K
	:=\{ -K+1, -K+3, \dots, K-3, K-1\}.
	$$ 
	Then, for any $y\in\C^M$ with $\|y\|_\infty\leq A$, there exists $q\in\calA^M$ and $u\in\C^M$ with $\|u\|_\infty\leq \sqrt{2} \delta$ satisfying the relationship
	\begin{equation*}
	y-q=H u.
	\end{equation*}
\end{lemma}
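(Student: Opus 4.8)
The plan is to exploit the block lower-bidiagonal structure of $H$ to reduce the matrix equation $y-q=Hu$ to a family of decoupled scalar recursions, and then to run a greedy (nearest-element) scalar quantizer on each, proving by induction that the internal state stays bounded by $\delta$. Writing $y,q,u\in\C^M$ as block vectors with blocks $y^{(k)},q^{(k)},u^{(k)}\in\C^m$ for $k=1,\dots,\lambda$ (recall $M=\lambda m$), the definition \eqref{eq:H} of $H$ gives $y^{(1)}-q^{(1)}=u^{(1)}$ and $y^{(k)}-q^{(k)}=u^{(k)}-\beta u^{(k-1)}$ for $k\ge 2$. Equivalently, setting $u^{(0)}:=0$, we obtain the recursion $u^{(k)}=\beta u^{(k-1)}+\big(y^{(k)}-q^{(k)}\big)$, which holds entrywise and hence decouples across the $m$ coordinates. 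Because the alphabet $\calA=\delta(\calZ_K+i\calZ_K)$ is Cartesian, the real and imaginary parts decouple as well, so it suffices to treat a single real scalar recursion $v_k=\beta v_{k-1}+(a_k-b_k)$ with $v_0=0$, where $|a_k|\le A$ and $b_k$ is to be chosen in $\delta\calZ_K$.

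For this scalar problem I would use the greedy rule that takes $b_k\in\delta\calZ_K$ to be a nearest lattice point to $\beta v_{k-1}+a_k$, and claim that $|v_k|\le\delta$ for every $k$. I would prove this by induction. The base case $v_0=0$ is immediate. For the inductive step, assume $|v_{k-1}|\le\delta$ and set $r_k:=\beta v_{k-1}+a_k$, so that $v_k=r_k-b_k$; the triangle inequality then gives $|r_k|\le\beta|v_{k-1}|+|a_k|\le\beta\delta+A$.

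The crux of the argument, and the only place the hypothesis \eqref{eq:parameter} enters, is the observation that the greedy quantizer does not overflow. The scaled lattice $\delta\calZ_K$ consists of equally spaced points with spacing $2\delta$ and extreme values $\pm\delta(K-1)$; consequently, for any real $r$ with $|r|\le\delta K$ there is a lattice point within distance $\delta$ of $r$ (for $|r|\le\delta(K-1)$ this is half the spacing, and for $\delta(K-1)<|r|\le\delta K$ the nearest extreme point is within $\delta$). By \eqref{eq:parameter} we have $\beta\delta+A=\delta(\beta+A\delta^{-1})\le\delta K$, hence $|r_k|\le\delta K$ and the greedy choice yields $|v_k|=|r_k-b_k|\le\delta$, completing the induction. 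Unwinding the reduction, each complex state $u^{(k)}_j$ has real and imaginary parts bounded by $\delta$, so $|u^{(k)}_j|\le\sqrt2\,\delta$, giving $\|u\|_\infty\le\sqrt2\,\delta$; the resulting $q$ lies in $\calA^M$ by construction and satisfies $y-q=Hu$. The main (and essentially only) obstacle is this no-overflow verification: one must confirm that the feedback term $\beta v_{k-1}$ plus the input $a_k$ never pushes $r_k$ outside the interval $[-\delta K,\delta K]$ that the alphabet can cover to accuracy $\delta$, which is precisely the content of the parameter budget $\beta+A\delta^{-1}\le K$.
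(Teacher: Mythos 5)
Your proof is correct and follows essentially the same route as the paper: your blockwise recursion with the greedy nearest-point rule is precisely the recursive algorithm $q_k := \mathrm{round}_\calA(y_k+\beta u_{k-m})$, $u_k := y_k-q_k+\beta u_{k-m}$ that the paper states immediately after the lemma (which it attributes to Chou--G\"unt\"urk), and your no-overflow induction under the budget $\beta + A\delta^{-1}\leq K$ is the standard stability argument behind that cited result. The only thing you supply that the paper leaves to the reference is the explicit induction, and it is carried out correctly, including the Cartesian decoupling of real and imaginary parts and the $\sqrt{2}\delta$ bound.
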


We note that for any $1< \beta < K$ and any $0 \leq A < \infty$, there exists $\delta > 0$ such that the condition \eqref{eq:parameter} is satisfied. Let $\calA$ be the alphabet and $Q$ denote the mapping $y \mapsto q$ implied by the above lemma, which can be implemented by means of a simple recursive algorithm \cite{chou2016distributed, chou2017distributed}. Indeed, letting $\text{round}_\calA$ be the rounding operation to $\calA$, we recursively define
\begin{align*}
	\begin{cases}
		\ q_k := {\rm round}_\calA(y_k) &\text{if } k\leq m, \\
		\ u_k := y_k-q_k &\text{if } k\leq m, \\
		\ q_k:= {\rm round}_\calA(y_k+\beta u_{k-m})  &\text{if } k>m, \\
		\ u_k:= y_k-q_k+\beta u_{k-m} &\text{if } k>m. 
	\end{cases}
\end{align*}

The significance of $V$ and $H$ is that $VH$ is very small when $\beta$ or $\lambda$ is large. Indeed, as shown in \cite{chou2016distributed}, we have
$\|VH\|_{\infty\to 2} = \sqrt{m}\ \beta^{-\lambda+1}$. The above findings provide the core strategy of our proposed quantization and recovery method. For any $\mu\in\calM(\T)$ whose total variation is bounded above by $A$, if $q:=Q(\calF_M\mu)$, then $Vq$ is now a small perturbation of $Vy=V\calF_M\mu=\calF_m (w_m \mu)$ because 
\begin{equation*}
	\|Vy-Vq\|_2
	\leq \|VH\|_{\infty\to 2}\|u\|_\infty
	\leq \sqrt{2m}\ \beta^{-\lambda+1} \delta
	=:\epsilon_V.
\end{equation*}
Interpreting $Vq$ as the first $m$ noisy Fourier coefficients of $w_m \mu$, under appropriate conditions, we can use any robust super-resolution recovery algorithm $\Psi_m\colon \C^m\to \calM(\T)$ to compute a discrete measure 
\[
\nu 
:=\Psi_m(Vq)
:=\sum_{k=1}^{\tilde S} \tilde b_k \delta_{\tilde t_k} 
\quad\text{and}\quad 
\tilde T:=\{t_k\}_{k=1}^{\tilde S}, 
\]
where it is possible that $\tilde S\not=S$. A key property of our approach is that $\mu$ and $w_m \mu$ have the same support. We think of $\tilde T$ as a perturbation of $T$ and we define the measure
\begin{equation*}
\tilde \mu
:=w_m^{-1}\nu 
=\sum_{k=1}^{\tilde S} \frac{\tilde b_k}{w_m(\tilde t_k)} \delta_{\tilde t_k}. 
\end{equation*}
It is important to emphasize that this re-weighting procedure depends on the quantizer parameters and the output of $\Psi_m$, but does not require any information about the amplitudes or support of $\mu$.

Before we continue, let us bound some of the constants that we introduced, and these estimates will frequently appear in our subsequent analysis. We typically view the parameters $A$, $\Delta$ and $\lambda$ as fixed, while the number of levels $K$ is variable. The other parameters $1<\beta< K$ and $\delta>0$ are free for us to select, so we optimize for these in terms of $\lambda,A,K$. At this point, we note the following elementary fact: For any $K\geq 2$, setting 
\begin{equation}
\label{eq:parameters}
\beta:= \frac{K(\lambda+1)}{\lambda+2} 
\quad\text{and}\quad
\delta:= \frac{(\lambda+2)A}{K}, 
\end{equation}
results in 
\[
\beta + A\delta^{-1} = K
\quad\text{and}\quad
\delta \beta^{-\lambda+1} < \mathrm{e} A (\lambda+1) K^{-\lambda}.
\]
(See, e.g. \cite[Lemma 3.2]{chou2016distributed} and \cite[Lemma 1]{chou2017distributed}.) This choice of parameters results in 
\begin{equation}
\label{eq:parameters3} 
\beta \geq \frac{4}{3}, \quad
C_\beta \leq 7, \quad 
L_{\beta,\lambda} 
\leq 12\pi \lambda, 
\quad
\epsilon_V \leq \mathrm{e}A (2M)^{1/2} \lambda^{-1/2} (\lambda+1) K^{-\lambda}.
\end{equation}

\medskip 
\fbox{
\begin{minipage}{\textwidth}
	\par \noindent {\bf System parameters and assumptions:}
	\begin{itemize}\itemsep-0.25em
		\item $M$ (number of Fourier measurements),
		\item $\lambda \geq 1$ (oversampling ratio) such that $M=\lambda m$, 
		\item $A$ (upper bound on TV-norm of the input measures),
		\item $K$ (number of quantization levels),
		\item $\beta$ and $\delta$ defined via \eqref{eq:parameters}.
	\end{itemize}
	
	\par \noindent {\bf Encoding (quantization) stage:}
	\begin{itemize}\itemsep-0.25em
		\item Input to quantizer: $y=\calF_M\mu$ such that $\|y\|_\infty \leq A$,
		\item $V$ and $H$ defined via \eqref{eq:V} and \eqref{eq:H},
		\item Quantization alphabet: $\calA:= \delta(\calZ_K + i \calZ_K)$,
		\item Output of quantizer: $q\in \calA^M$ such that $\|Vy - Vq\|_2 \leq \epsilon_V$.
	\end{itemize}
	
	\par \noindent {\bf Decoding (recovery) stage:}
	\begin{itemize}\itemsep-0.25em
		\item Input to decoder: $q$,
		\item 
		Feed $Vq$ into an any super-resolution algorithm to compute a measure 
		\[
		\nu=\sum_{k=1}^{\tilde S} \tilde b_k \delta_{\tilde t_k},
		\]
		and abort if it cannot be found (e.g. invalid measurements),
		\item Output of decoder: 
		\[
		\tilde \mu = \sum_{k=1}^{\tilde S}  \frac{\tilde b_k}{w_m(\tilde t_k)} \delta_{\tilde t_k} .
		\]
	\end{itemize}	
\end{minipage}
}

\section{Quantization for convex methods}
\label{sec:convex}

\subsection{Preliminaries}

There are a number of robust convex algorithms for super-resolution. These are based upon finding a measure that explains the observations and has minimal total variation. In the noiseless setting where we have access to the clean Fourier measurements $\calF_M\mu$, total variation minimization (TV-min) produces an estimate of $\mu$ by finding 
\begin{equation*}
TV_{M,0}(\calF_M\mu):=\argmin\{\|\nu\|_\TV\colon \calF_M\mu=\calF_M\nu\}. 
\end{equation*}
In \cite{candes2014towards} it was shown that $M\geq 257$ and $M-1\geq 4/\Delta(\mu)$ are sufficient conditions\footnote{Some super-resolution papers assume that the Fourier samples are indexed by $\{-N,\dots,N\}$ for some integer $N$, whereas we assume the frequencies lie in $\{0,\dots,M-1\}$. Both settings are equivalent, but we need to be careful with the transcription process: our number of measurements is $M$, which corresponds to $2N+1$ in some other papers.} to guarantee that $\mu=TV_{M,0}(\calF_M\mu)$. The numerical constant $4$ that appears in the separation assumption can be reduced at the price of increasing the number of measurements $M$ (e.g. \cite{fernandez2016super}). On the other hand, a minimum separation assumption on the order of $1/M$ is necessary for TV-min: there exists a sufficiently small $C$ such that for any $M$, there exists is a $\mu\in\calM(\T)$ with $\Delta(\mu)\leq C/M$ such that $\mu$ is not a solution to TV-min given noiseless Fourier data $y=\calF_M\mu$ (e.g. \cite{duval2015exact,benedetto2020super}). 

In the presence of noise, given noisy data $y\in\C^{M}$ such that $\|\calF_M\mu-y\|_2\leq \epsilon$ for a known $\epsilon>0$, the TV-min algorithm outputs an estimate of $\mu$ given by 
\begin{equation*} 
TV_{M,\epsilon}(y)
:=\argmin_{\nu\in\calM(\T)} \, \Big\{\|\nu \|_{\TV}\colon 
\|y-\calF_M \nu\|_2 \leq \epsilon\Big\}.
\end{equation*}
This is a convex program whose feasibility is guaranteed by the assumption $\|y - \calF_M\mu\|_2 \leq \epsilon$. The solution may not be unique, but it is known that there is at least one minimizer which is a discrete measure (e.g. \cite{azais2015spike,fernandez2013support}). Sometimes TV-min is recast in unconstrained form. For a fixed parameter $\tau>0$, the BLASSO algorithm estimates $\mu$ by computing the solution to 
\begin{equation*}
BL_{M,\tau}(y)
:=\argmin_{\nu\in\calM(\T)}\Big\{\tau \|\nu\|_{\TV}+\frac{1}{2} \|y-\calF_M\nu\|^2_2\Big\}. 
\end{equation*}

Under identical separation assumptions, it can be shown that solutions to either TV-min or BLASSO approximate the target measure in an appropriate sense. For any discrete measures $\rho=\sum_{j=1}^R u_j\delta_{r_j}$ and $\nu=\sum_{k=1}^S v_k \delta_{s_k}$, which may contain different numbers of atoms, we define the ``neighborhood'' index sets\footnote{The numerical constant $0.3298$ that appears in \eqref{eq:Ij} is double the one found in \cite{fernandez2013support}, again due to our convention that the Fourier samples are from $\{0,1,\dots,M-1\}$.}
\begin{equation}\label{eq:Ij}
\calI_j^M
:=\calIM_j(\rho,\nu):=\big \{k: |r_j-s_k|_\T \leq 0.3298\,(M-1)^{-1} \big \}, \quad\text{for}\quad  j=1,\dots,R,
\end{equation}
and the residual index set 
\begin{equation*}
\calIM_0
:=\calIM_0(\rho,\nu) := \{1,\dots,S\}\setminus \bigcup_{j=1}^R \ \calIM_j(\rho,\nu).
\end{equation*}
Note that $\calIM_j(\rho,\nu)$ and $\calIM_k(\rho,\nu)$ are disjoint whenever $\Delta(\rho)\geq 1/M$. To quantify the error due to approximating $\rho$ by $\nu$, we define the following:
\begin{align*}
\calE_1^M(\rho,\nu)
&:=\max_j \Big| u_j -\sum_{k \in \calIM_j} v_k \Big|, \\
\calE_2^M(\rho,\nu)
&:=\sum_j \sum_{k \in \calIM_j} |v_k|\,|r_j - s_k|_\T^2, \\
\calE_3^M(\rho,\nu)
&:=\sum_{k \in \calIM_0} |v_k |.
\end{align*}
The first, second and third terms capture, respectively, the amplitude error, support localization error, and total variation of the artificial atoms. Note that all three distortion functions are not symmetric in $\rho$ and $\nu$.

We recall the following error bounds for TV-min \cite[Theorem 1.2]{fernandez2013support} and for BLASSO \cite[Theorems 2.1 and 2.2]{azais2015spike} under the assumption that the minimum separation is sufficiently large. 

\begin{theorem}
	\label{thm:convex}
	There exist universal constants $C_1,C_2,C_3>0$ such that the following hold. For any integer $M\geq 257$, $\Delta\geq 4/(M-1)$, and $\mu\in\calM(\T,\Delta)$, given the Fourier measurements $y=\calF_M\mu+z$ with $\|z\|_2\leq\epsilon$ for some known $\epsilon>0$, for any discrete $\tilde\mu:=TV_\epsilon(y)$ or $\tilde\mu:=BL_\epsilon(y)$, we have
	\begin{equation*}
	\calE_1^M(\mu,\tilde\mu) \leq C_1 \epsilon, \quad
	\calE_2^M(\mu,\tilde\mu) \leq C_2 M^{-2}\epsilon, \quad 
	\calE_3^M(\mu,\tilde\mu) \leq C_3 \epsilon.
	\end{equation*}
\end{theorem}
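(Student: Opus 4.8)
Since the statement is explicitly recalled from \cite{fernandez2013support} and \cite{azais2015spike}, the honest route is to transcribe their dual-certificate arguments, adjusting only for the frequency convention flagged in the footnotes (the $M \leftrightarrow 2N+1$ shift and the doubling of the constant $0.3298$). The plan is to reduce all three bounds to a single scalar estimate on the error measure and then read off $\calE_1^M,\calE_2^M,\calE_3^M$ from the pointwise geometry of an explicit certificate. Throughout, write $h := \tilde\mu-\mu$ for the error measure and $T=\supp(\mu)$.

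First I would invoke the Cand\`es--Fern\'andez-Granda construction: under $M\geq 257$ and $\Delta\geq 4/(M-1)$ there is a trigonometric polynomial $q(t)=\sum_{k=0}^{M-1} c_k e^{-2\pi i kt}$, i.e.\ $q=\calF_M^* c$, with $q(t_j)=a_j/|a_j|$ on $T$, $|q(t)|<1$ off $T$, coefficient vector controlled by a universal constant in the present normalization, a strict gap $1-|q(t)|\geq c_0>0$ outside the neighborhoods $\calI_j^M$, and a quadratic lower bound $1-|q(t)|\geq c_1(M-1)^2\,|t-t_j|_\T^2$ on each $\calI_j^M$. The pairing with the data is the crucial link: since $\|\calF_M\tilde\mu-y\|_2\leq\epsilon$ (feasibility of $\tilde\mu$) and $\|y-\calF_M\mu\|_2\leq\epsilon$ (hypothesis), we get $\|\calF_M h\|_2\leq 2\epsilon$, hence $|\!\int \overline{q}\,dh| = |\langle c,\calF_M h\rangle|\leq 2\epsilon\|c\|_2 = O(\epsilon)$. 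Combining $|q|\leq 1$, optimality $\|\tilde\mu\|_{\TV}\leq\|\mu\|_{\TV}$, and $\Re\!\int\overline{q}\,d\mu=\|\mu\|_{\TV}$ yields the master estimate
\begin{equation*}
\int (1-|q|)\, d|\tilde\mu|
\;\leq\; \|\tilde\mu\|_{\TV}-\Re\!\int \overline{q}\,d\tilde\mu
\;=\; \|\tilde\mu\|_{\TV}-\|\mu\|_{\TV}-\Re\!\int \overline{q}\,dh
\;\leq\; \Big|\int \overline{q}\,dh\Big|
\;\leq\; 2\epsilon\,\|c\|_2 .
\end{equation*}

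With this single inequality in hand the three bounds follow from the geometry of $q$. On the residual region the gap $1-|q|\geq c_0$ gives $\calE_3^M(\mu,\tilde\mu)=\sum_{k\in\calI_0^M}|v_k|\leq c_0^{-1}\!\int(1-|q|)\,d|\tilde\mu|\leq C_3\epsilon$. On each $\calI_j^M$ the quadratic bound gives $\calE_2^M(\mu,\tilde\mu)\leq (c_1(M-1)^2)^{-1}\!\int(1-|q|)\,d|\tilde\mu|\leq C_2 M^{-2}\epsilon$, which is exactly where the $M^{-2}$ factor is born. For the amplitude bound $\calE_1^M$ I would pair $h$ against a second, locally flat certificate $q_j$ (with $q_j(t_j)=1$, $q_j'(t_j)=0$, and small modulus away from $t_j$): expanding $\int\overline{q_j}\,dh$ reproduces $\sum_{k\in\calI_j^M}v_k-a_j$ up to remainders already controlled by $\calE_2^M$ and $\calE_3^M$, so that $\calE_1^M\leq C_1\epsilon$. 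The BLASSO case is identical once the constrained optimality $\|\tilde\mu\|_{\TV}\leq\|\mu\|_{\TV}$ is replaced by the first-order (subgradient) condition of the BLASSO functional with $\tau\asymp\epsilon$, which delivers the same master estimate.

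The main obstacle is not this bookkeeping but the certificate construction itself: producing $q$ with a \emph{sharp} quadratic lower bound $1-|q(t)|\geq c_1(M-1)^2|t-t_j|_\T^2$ valid on the explicit neighborhood of radius $0.3298\,(M-1)^{-1}$, together with a uniform strict gap outside and a coefficient norm $\|c\|_2$ bounded independently of $M$. This requires interpolation against a squared Fej\'er-type kernel and careful control of $q$ and its first two derivatives near each node, and it is the step that pins down the universal constants and the precise separation threshold. Since all of this is carried out in the cited references, the remaining work is purely transcriptional—verifying that the convention change only rescales constants and does not affect the stated $\epsilon$ and $M^{-2}\epsilon$ scalings.
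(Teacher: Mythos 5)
The paper offers no proof of this statement at all---it is recalled verbatim (modulo the frequency-convention footnotes) from \cite[Theorem 1.2]{fernandez2013support} and \cite[Theorems 2.1, 2.2]{azais2015spike}---so your decision to transcribe the dual-certificate arguments of those references is exactly the paper's own approach. Your sketch of that machinery (the sign-interpolating certificate with quadratic decay, the master estimate $\int(1-|q|)\,d|\tilde\mu| \leq |\int \overline{q}\,dh| = O(\epsilon)$ via feasibility/optimality, the $\calE_2^M,\calE_3^M$ bounds from the certificate geometry, and the second locally-flat certificate for $\calE_1^M$) is a faithful and correct outline of how those cited proofs actually run.
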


An immediate consequence of this robustness result is an upper bound for the reconstruction error for when $\calF_M\mu$ is quantized to the alphabet $\calA_K$ by MSQ. Indeed, if $z$ is quantization error, then $\|z\|_2\leq \sqrt{M}\|z\|_\infty$, and this theorem shows that the overall reconstruction accuracy from MSQ samples using either TV-min or BLASSO is $O(\sqrt{M} K^{-1})$.

\subsection{Noise-shaping quantization for TV-min and BLASSO}

We show how to combine our proposed quantization method with TV-min and BLASSO to derive an encoding-decoding scheme that offers a significantly better reconstruction accuracy than MSQ. To do this, let $\|\cdot\|_\Lip$ be a Lipschitz norm on $\T$, which we define to be
\[
\|f\|_\Lip
:= \max(\|f\|_\infty, |f|_{\Lip})
\quad\text{and}\quad
|f|_\Lip
:= \sup_{x,y\in\T} \frac{|f(x)-f(y)|}{|x-y|_\T}.
\]
We let $\|\cdot\|_{\Lip^*}$ denote its dual norm. Since each $\rho\in \calM(\T)$ is a bounded linear functional on $\Lip(\T)$, we define a distortion function,
\[
\calE_\Lip(\rho,\nu)
:=\|\rho-\nu\|_{\Lip^*}
=\sup_{\|\varphi\|_\Lip \leq 1} \Big| \int_0^1 \varphi \ d(\rho-\nu) \Big|.
\]
If $\rho$ and $\nu$ are both probability measures, then $\calE_\Lip$ is the usual $p=1$ Wasserstein metric, but for super-resolution, we in general must deal with complex measures that do not necessarily have the same total variation. 

The following lemma allows us to connect $\calE_\Lip$ with $\calE^M_1$, $\calE^M_2$, and $\calE^M_3$. This also illustrates why $\calE_\Lip$ is useful for super-resolution. 

\begin{lemma}
	\label{lem:Elip1}
	For any $\rho\in\calM(\T,\Delta)$ with $R$ atoms,  atomic $\nu\in\calM(\T)$, and integer $M\geq \Delta^{-1}$, we have
	\[
	\calE_\Lip(\rho,\nu)
	\leq R \calE_1^M(\rho,\nu) + \|\nu\|_{\TV}^{1/2} \big( \calE_2^M(\rho,\nu) \big)^{1/2} + \calE_3^M(\rho,\nu).
	\]
\end{lemma}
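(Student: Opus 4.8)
The plan is to bound $\big|\int_0^1\varphi\,d(\rho-\nu)\big|$ for an arbitrary $\varphi$ with $\|\varphi\|_\Lip\le 1$ and then pass to the supremum defining $\calE_\Lip(\rho,\nu)$. Write $\rho=\sum_{j=1}^R u_j\delta_{r_j}$ and $\nu=\sum_{k=1}^S v_k\delta_{s_k}$. Because $M\ge\Delta^{-1}$ forces $\Delta(\rho)\ge\Delta\ge 1/M$, the neighborhoods $\calI_1^M,\dots,\calI_R^M$ are pairwise disjoint (as noted right after their definition), so together with $\calI_0^M$ they partition the index set $\{1,\dots,S\}$ of $\nu$. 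This partition is what lets me organize the atoms of $\nu$ according to the atom of $\rho$ they cluster around.

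First I would expand $\int_0^1\varphi\,d(\rho-\nu)=\sum_{j}u_j\varphi(r_j)-\sum_k v_k\varphi(s_k)$, split the $\nu$-sum along the partition, and for each $j\ge 1$ add and subtract $\varphi(r_j)$ inside the block $k\in\calI_j^M$. This yields precisely three pieces,
\[
\sum_{j=1}^R\Big(u_j-\sum_{k\in\calI_j^M}v_k\Big)\varphi(r_j)
\;+\;\sum_{j=1}^R\sum_{k\in\calI_j^M}v_k\big(\varphi(r_j)-\varphi(s_k)\big)
\;-\;\sum_{k\in\calI_0^M}v_k\varphi(s_k),
\]
which are engineered to reproduce $\calE_1^M$, $\calE_2^M$, and $\calE_3^M$ respectively.

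The outer two pieces are routine. Using $\|\varphi\|_\infty\le\|\varphi\|_\Lip\le 1$, the first is at most $\sum_{j=1}^R\big|u_j-\sum_{k\in\calI_j^M}v_k\big|\le R\,\calE_1^M(\rho,\nu)$, and the third is at most $\sum_{k\in\calI_0^M}|v_k|=\calE_3^M(\rho,\nu)$. The delicate step is the middle piece: the Lipschitz estimate $|\varphi(r_j)-\varphi(s_k)|\le|\varphi|_\Lip|r_j-s_k|_\T\le|r_j-s_k|_\T$ produces only the first power of the separation, whereas $\calE_2^M$ is weighted by the square $|r_j-s_k|_\T^2$. The key idea, and the main obstacle to anticipate, is to recover the missing power through Cauchy--Schwarz: writing $|v_k|\,|r_j-s_k|_\T=|v_k|^{1/2}\cdot\big(|v_k|^{1/2}|r_j-s_k|_\T\big)$ and applying Cauchy--Schwarz over the pairs $(j,k)$ with $k\in\calI_j^M$ bounds the middle piece by $\big(\sum_{j,k}|v_k|\big)^{1/2}\big(\sum_{j,k}|v_k||r_j-s_k|_\T^2\big)^{1/2}$. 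Here the disjointness of the $\calI_j^M$ gives $\sum_{j,k}|v_k|\le\|\nu\|_{\TV}$, while the second factor is exactly $(\calE_2^M(\rho,\nu))^{1/2}$. Adding the three estimates and taking the supremum over $\|\varphi\|_\Lip\le 1$ yields the stated bound.
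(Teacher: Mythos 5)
Your proof is correct and follows the same route as the paper's: the identical three-term decomposition of $\int_0^1\varphi\,d(\rho-\nu)$ along the partition $\calI_0^M,\calI_1^M,\dots,\calI_R^M$, the bounds $\|\varphi\|_\infty\le 1$ and $|\varphi|_\Lip\le 1$ for the outer pieces, and Cauchy--Schwarz to upgrade the first-power distances in the middle piece to the squared distances appearing in $\calE_2^M$. The one place you diverge is worth recording: you apply Cauchy--Schwarz a single time, over all pairs $(j,k)$ with $k\in\calI_j^M$, and then invoke disjointness of the sets $\calI_j^M$ (valid since $\Delta(\rho)\ge\Delta\ge 1/M$) to bound $\sum_{j}\sum_{k\in\calI_j^M}|v_k|\le\|\nu\|_{\TV}$, which lands directly on $\|\nu\|_{\TV}^{1/2}\big(\calE_2^M(\rho,\nu)\big)^{1/2}$. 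The paper instead applies Cauchy--Schwarz within each block $j$ separately, obtaining $\sum_{j=1}^R\|\nu\|_{\TV}^{1/2}\big(\sum_{k\in\calI_j^M}|v_k||r_j-s_k|_\T^2\big)^{1/2}$, and then treats this as equal to $\|\nu\|_{\TV}^{1/2}\big(\calE_2^M(\rho,\nu)\big)^{1/2}$; but since $\sum_j a_j^{1/2}\ge\big(\sum_j a_j\big)^{1/2}$ for nonnegative $a_j$, that final step is an inequality in the wrong direction, so the paper's last line does not literally follow from its penultimate one. Your global application of Cauchy--Schwarz is precisely the repair of this blemish: it is the rigorous (and cleaner) way to reach the stated estimate, at the mild cost of needing the disjointness of the $\calI_j^M$, which the hypothesis $M\ge\Delta^{-1}$ supplies.
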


\begin{proof}
	By assumption, we can represent $\rho$ and $\nu$ as $\rho=\sum_{j=1}^R u_j\delta_{r_j}$ and $\nu=\sum_{k=1}^S v_k \delta_{s_k}$. For any $\epsilon>0$, there exists $\varphi$ with $\|\varphi\|_\Lip\leq 1$ such that
	\[
	\calE_\Lip(\rho,\nu)
	\leq \Big| \int_0^1 \varphi \ d(\rho-\nu) \Big|+\epsilon. 
	\]
	By definition, we have
	\begin{align*}
	\int_0^1 \varphi \ d(\rho-\nu) 
	&=\sum_{j=1}^R u_j \varphi(r_j) - \sum_{j=1}^R\sum_{k\in \calI_j^M} v_k \varphi(s_k) - \sum_{k\in\calI_0^M} v_k\varphi(s_k) \\
	&=\sum_{j=1}^R \(u_j-\sum_{k\in\calI_j^M} v_k \) \varphi(r_j) + \sum_{j=1}^R \sum_{k\in \calI_j^M} v_k \big(\varphi(r_j)-\varphi(s_k)\big) - \sum_{k\in\calI_0^M} v_k\varphi(s_k). 
	\end{align*}
	It follows that 
	\begin{align*}
	\Big| \int_0^1 \varphi \ d(\rho-\nu) \Big|
	&\leq \sum_{j=1}^R \Big| u_j-\sum_{k\in\calI_j^M} v_k \Big| + \sum_{j=1}^R \sum_{k\in \calI_j^M} |v_k| |r_j-s_k|_{\T} + \sum_{k\in\calI_0^M} |v_k| \\
	&\leq R \calE_1^M(\rho,\nu) + \sum_{j=1}^R \|\nu\|_{\TV}^{1/2} \(\sum_{k\in \calI_j^M} |v_k| |r_j-s_k|_{\T}^2\)^{1/2} + \calE_3^M(\rho,\nu) \\
	&= R \calE_1^M(\rho,\nu) + \|\nu\|_{\TV}^{1/2} \big( \calE_2^M(\rho,\nu) \big)^{1/2} + \calE_3^M(\rho,\nu). 
	\end{align*} 
	Since $\epsilon$ is arbitrary, this completes the proof. 
\end{proof}

Our distortion $\calE_\Lip$ upper bounds other distortion functions as well, such as the $L^p(\T)$ norm of the convolution of $\rho-\nu$ with a suitable kernel, which was employed in \cite{candes2013super,li2017elementary}. Indeed, for any Lipschitz $\psi$, we have
\begin{align*}
\|\psi*(\rho-\nu)\|_{\infty}
&\leq \|\psi\|_\Lip \Big\| \frac{\psi}{\|\psi\|_{\Lip}} *(\rho-\nu)\Big\|_\infty \\
&\leq \|\psi\|_\Lip \|\rho-\nu\|_{\Lip^*} \\
&= \|\psi\|_\Lip \, \calE_\Lip(\rho,\nu).
\end{align*}

The next lemma establishes how multiplication of atomic measures by any Lipschitz function that is bounded above and below affects $\calE_\Lip$. 


\begin{lemma}
	\label{lem:Elip2}
	Fix any Lipschitz $\psi$ for which there exists $C\geq 1$ such that $C^{-1}\leq |\psi(t)|\leq C$ for all $t\in\T$. For any $\rho\in\calM(\T)$, we have 
	\begin{align*}
	\min\big( C^{-1}, 2\|\psi\|_\Lip^{-1} \big) \|\psi\rho\|_{\Lip^*}
	\leq 
	\|\rho\|_{\Lip^*}
	\leq \max\big(C,C^2|\psi|_\Lip \big) \, \|\psi\rho\|_{\Lip^*}.
	\end{align*}
\end{lemma}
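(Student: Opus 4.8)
The plan is to exploit the duality definition of $\|\cdot\|_{\Lip^*}$ together with the elementary algebraic identity
\[
\int_0^1 \phi \ d(\psi\rho) = \int_0^1 \phi\psi \ d\rho,
\]
valid for every bounded Lipschitz test function $\phi$. Both inequalities then reduce to understanding how multiplication by $\psi$ (respectively by $1/\psi$) distorts the Lipschitz norm of a test function. For the upper bound I would start from any $\varphi$ with $\|\varphi\|_\Lip \leq 1$, write $\int \varphi\ d\rho = \int (\varphi/\psi)\ d(\psi\rho)$, and conclude $|\int\varphi\ d\rho| \leq \|\varphi/\psi\|_\Lip\, \|\psi\rho\|_{\Lip^*}$; taking the supremum over $\varphi$ gives $\|\rho\|_{\Lip^*} \leq \big(\sup_{\|\varphi\|_\Lip\leq1}\|\varphi/\psi\|_\Lip\big)\|\psi\rho\|_{\Lip^*}$. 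Symmetrically, testing $\psi\rho$ against $\phi$ with $\|\phi\|_\Lip\leq1$ and using $\int\phi\ d(\psi\rho)=\int\phi\psi\ d\rho$ yields $\|\psi\rho\|_{\Lip^*}\leq\big(\sup_{\|\phi\|_\Lip\leq1}\|\phi\psi\|_\Lip\big)\|\rho\|_{\Lip^*}$, which rearranges into the claimed lower bound on $\|\rho\|_{\Lip^*}$.

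The remaining work is purely about the product structure of the Lipschitz norm. The two facts I would record are the estimates $\|fg\|_\infty\leq\|f\|_\infty\|g\|_\infty$ and $|fg|_\Lip\leq\|f\|_\infty|g|_\Lip+|f|_\Lip\|g\|_\infty$, both of which follow from the telescoping identity $f(x)g(x)-f(y)g(y)=f(x)(g(x)-g(y))+(f(x)-f(y))g(y)$. Applying these with $g=1/\psi$ for the upper bound requires the auxiliary estimates $\|1/\psi\|_\infty\leq C$ (immediate from $|\psi|\geq C^{-1}$) and $|1/\psi|_\Lip\leq C^2|\psi|_\Lip$ (from $|1/\psi(x)-1/\psi(y)| = |\psi(y)-\psi(x)|/|\psi(x)\psi(y)|$ combined with $|\psi|\geq C^{-1}$). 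Feeding $\|\varphi\|_\infty,|\varphi|_\Lip\leq1$ through the product rule then bounds $\|\varphi/\psi\|_\Lip$ by a quantity of the form $\max(C, C^2|\psi|_\Lip)$. For the lower bound the same product rule applied with $g=\psi$, together with $\|\psi\|_\infty\leq C$, controls $\|\phi\psi\|_\Lip$ in terms of $\|\psi\|_\Lip$.

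I expect the only real obstacle to be the bookkeeping of constants, since the product rule for $|\cdot|_\Lip$ generates two cross terms (one proportional to $|\varphi|_\Lip\|1/\psi\|_\infty$ and one to $\|\varphi\|_\infty|1/\psi|_\Lip$) that must be consolidated into the single $\max(\cdot,\cdot)$ form stated, and the two pieces $\|\cdot\|_\infty$ and $|\cdot|_\Lip$ of the Lipschitz norm have to be tracked separately before the final maximum is taken. I would use $C\geq1$ freely to absorb lower-order terms and collapse sums into maxima. Everything else is routine: each duality bound is a single line, and no compactness or existence argument is needed, because $\psi$ being bounded above and below guarantees that $\rho$ and $\psi\rho$ are simultaneously finite measures with finite dual Lipschitz norms.
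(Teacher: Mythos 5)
Your proposal follows essentially the same route as the paper's proof: both start from a (near-)optimal test function in the duality definition, transfer it through multiplication or division by $\psi$ via the identity $\int \varphi\, d\rho = \int (\varphi/\psi)\, d(\psi\rho)$ (and its counterpart for $\varphi\psi$), and then control $\|\varphi/\psi\|_\Lip$ and $\|\varphi\psi\|_\Lip$ using the product/quotient rules for the Lipschitz norm together with $C^{-1}\leq|\psi|\leq C$. The only difference is bookkeeping of constants: your quotient rule honestly yields $\|\varphi/\psi\|_\Lip \leq C + C^2|\psi|_\Lip$, which exceeds the stated $\max\big(C, C^2|\psi|_\Lip\big)$ by at most a factor of $2$ — a harmless discrepancy, since the paper's own proof asserts $|\varphi/\psi|_\Lip \leq C^2|\psi|_\Lip$ while silently dropping the same cross term, and all downstream uses of the lemma absorb such factors into universal constants.
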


\begin{proof}
	We concentrate on the second claimed inequality first. For any $\epsilon>0$, there exists a Lipschitz $\varphi$ such that $\|\varphi\|_\Lip\leq 1$ and 
	\[
	\|\rho\|_{\Lip^*}
	\leq \Big| \int_0^1 \varphi \ d\rho\Big| + \epsilon
	\leq \Big\| \frac{\varphi}{\psi} \Big\|_\Lip \|\psi\rho\|_{\Lip^*}+\epsilon. 
	\]
	Note that $\|\varphi/\psi\|_\infty \leq C\|\varphi\|_\infty\leq C$ and that  $| \varphi/\psi |_\Lip\leq C^2 |\psi|_{\Lip}$, which proves the second claimed inequality. 
	
	For the first claimed inequality, for any $\epsilon>0$, there exists a Lipschitz $\varphi$ such that $\|\varphi\|_\Lip\leq 1$ and 
	\[
	\|\psi\rho\|_{\Lip^*}
	\leq \Big| \int_0^1 \varphi \psi \ d\rho \Big| + \epsilon
	\leq \| \varphi \psi \|_\Lip \, \|\rho\|_{\Lip^*}+\epsilon. 
	\]
	The conclusion follows once we use the observation that $\| \varphi \psi \|_\infty \leq C$ and 
	\[
	| \varphi \psi |_\Lip
	\leq |\varphi|_{\Lip} \|\psi\|_\infty + \|\varphi\|_\infty |\psi|_\Lip
	\leq \|\psi\|_\infty + |\psi|_{\Lip}
	\leq 2\|\psi\|_{\Lip}. 
	\]

\end{proof}

The proceeding two lemmas will enable us to greatly simplify our presentation of the following theorem which quantifies the performance of our quantization and recovery strategy. 

\begin{theorem}
	\label{thm:quanconvex}
	There exists a universal constant $C>0$ such that the following holds. For any $A>0$ and any integers $K\geq 2$, $\lambda\geq 1$, $m\geq 257$, let $M=\lambda m$ and $(Q,D)$ be the quantizer-decoder pair summarized in Section \ref{sec:quan} that uses either $TV_{m,\epsilon_V}$ or $BL_{m,\epsilon_V}$ as the super-resolution algorithm in the decoding step.
	
	For any $\Delta\geq 4/(m-1)$ and $\mu\in \calM(\T,\Delta)$ with $\|\mu\|_{\TV}\leq A$, given the quantized Fourier measurements $q:=Q(\calF_M\mu)$, there is an output $\tilde \mu:= D(q)$ that satisfies the error bound
	\begin{equation*}
	\calE_\Lip(\mu,\tilde\mu)
	\leq C A S M^{3/2} K^{-\lambda} + C A M^{1/4} \lambda^{5/4} K^{-\lambda/2}.
	\end{equation*}
\end{theorem}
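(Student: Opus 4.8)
The plan is to regard the condensed, quantized data $Vq$ as a noisy observation of the reweighted measure $w_m\mu$ at resolution $m$, run the robustness theorem there, and then undo the reweighting using the dual-Lipschitz estimates of Lemmas \ref{lem:Elip1} and \ref{lem:Elip2}. The starting point is the algebraic identity \eqref{eq:alg}, which gives $V\calF_M\mu = \calF_m(w_m\mu)$. Since $\|\calF_M\mu\|_\infty \le \|\mu\|_\TV \le A$, the quantizer is well defined and produces $q$ with $\|V\calF_M\mu - Vq\|_2 \le \|VH\|_{\infty\to2}\,\|u\|_\infty \le \epsilon_V$, so $Vq$ is an $\epsilon_V$-perturbation (in $\ell^2$) of the first $m$ Fourier coefficients of $w_m\mu$. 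Because multiplication by $w_m$ preserves the support, $w_m\mu$ has the same $S$ atoms and minimum separation as $\mu$, hence $w_m\mu\in\calM(\T,\Delta)$ with $\Delta\ge 4/(m-1)$, and \eqref{eq:weights} gives $\|w_m\mu\|_\TV\le C_\beta A$. This is exactly the setting required to apply Theorem \ref{thm:convex} at resolution $m$ with target $w_m\mu$ and noise level $\epsilon_V$.

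Next I would analyze the decoder's inner reconstruction $\nu=\Psi_m(Vq)$. Feasibility of $w_m\mu$ for $TV_{m,\epsilon_V}$ (and comparison of objective values for $BL_{m,\epsilon_V}$) bounds $\|\nu\|_\TV\le C_\beta A$ up to a negligible term, and Theorem \ref{thm:convex} yields
\[
\calE_1^m(w_m\mu,\nu)\le C_1\epsilon_V,\qquad \calE_2^m(w_m\mu,\nu)\le C_2\,m^{-2}\epsilon_V,\qquad \calE_3^m(w_m\mu,\nu)\le C_3\epsilon_V.
\]
Inserting these, together with $R=S$ and $\|\nu\|_\TV\le C_\beta A$, into Lemma \ref{lem:Elip1} at resolution $m$ gives
\[
\calE_\Lip(w_m\mu,\nu)\le C_1 S\,\epsilon_V + (C_\beta C_2 A)^{1/2}\,m^{-1}\epsilon_V^{1/2} + C_3\epsilon_V.
\]

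The last step transfers this estimate back to $\mu$ and $\tilde\mu$. Since $\tilde\mu=w_m^{-1}\nu$ and $\mu=w_m^{-1}(w_m\mu)$, we have $\mu-\tilde\mu=w_m^{-1}(w_m\mu-\nu)$, so I would apply Lemma \ref{lem:Elip2} with $\psi=w_m^{-1}$. By \eqref{eq:weights} this weight satisfies $C_\beta^{-1}\le|\psi|\le C_\beta$, and the Lipschitz bound stated after \eqref{eq:weights} gives $\|\psi\|_\Lip\le\max(C_\beta,\,L_{\beta,\lambda}m)$. The first inequality of Lemma \ref{lem:Elip2} then produces
\[
\calE_\Lip(\mu,\tilde\mu)=\|w_m^{-1}(w_m\mu-\nu)\|_{\Lip^*}\le \max\big(C_\beta,\tfrac12 L_{\beta,\lambda}m\big)\,\calE_\Lip(w_m\mu,\nu).
\]
By \eqref{eq:parameters3} we have $L_{\beta,\lambda}m\le 12\pi\lambda m=12\pi M$, so the prefactor is $O(M)$. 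Substituting the bound $\epsilon_V\le eA(2M)^{1/2}\lambda^{-1/2}(\lambda+1)K^{-\lambda}$ and $m=M/\lambda$ and collecting powers, the amplitude and residual contributions (the $S\epsilon_V$ and $\epsilon_V$ terms, scaled by $O(M)$) produce the first stated term, of order $ASM^{3/2}K^{-\lambda}$ up to lower-order powers of $\lambda$, while the support-localization contribution (the $m^{-1}\epsilon_V^{1/2}$ term, scaled by $O(M)$) produces $AM^{1/4}\lambda^{5/4}K^{-\lambda/2}$, which dominates for large $K$.

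I expect the crux to be this final reweighting step: the reciprocal weight $w_m^{-1}$ has Lipschitz constant of order $M$, and it is precisely this single $O(M)$ factor that inflates the mild $m$-scaled functionals of Theorem \ref{thm:convex} into the stated $M^{3/2}$ and $M^{1/4}$ powers. The care required is in (i) checking that Lemma \ref{lem:Elip2} is invoked in the correct direction so that one pays $\|\psi\|_\Lip$ rather than $\|\psi\|_\infty$, and (ii) tracking how the three functionals $\calE_1^m,\calE_2^m,\calE_3^m$, which carry different powers of $m$, combine with this prefactor and the $m=M/\lambda$ substitution to yield the precise exponents of $M$ and $\lambda$. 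Verifying the quantizer hypothesis, the support-preservation of $w_m$, and the total-variation bound on $\nu$ is routine by comparison.
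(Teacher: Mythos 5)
Your proposal is correct and follows essentially the same route as the paper's proof: condense the quantized data via $V$, verify that $w_m\mu$ inherits the separation and a $C_\beta A$ total-variation bound so Theorem \ref{thm:convex} applies at resolution $m$, bound $\|\nu\|_{\TV}$ by minimality/feasibility, combine the three error functionals through Lemma \ref{lem:Elip1}, and undo the reweighting via Lemma \ref{lem:Elip2} at the cost of an $O(M)$ Lipschitz prefactor, which yields exactly the stated exponents. The only cosmetic difference is that you invoke the first inequality of Lemma \ref{lem:Elip2} with $\psi = w_m^{-1}$ applied to $w_m\mu-\nu$, whereas the paper invokes the second inequality with $\psi = w_m$ applied to $\rho = \mu-\tilde\mu$; both give the same $O(M)$ factor, and your honest remark that the first term carries an extra lower-order power of $\lambda$ matches what the paper's own computation produces before it is dropped in the theorem statement.
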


\begin{proof}
	We start by recalling some essential properties our encoding-decoding scheme. Let $\mu=\sum_{j=1}^S a_j\delta_{t_j}\in \calM(\T,\Delta)$. We have $\|\calF_M\mu\|_\infty\leq \|\mu\|_{\TV}\leq A$ and $V\calF_M\mu=\calF_m(w_m \mu)$. Letting $q:=Q(\calF_M \mu)\in\C^M$, we define the condensed quantization error $z:=Vy-Vq$ and recall that $\|z\|_2\leq \epsilon_V$. 
	
	If we use TV-min as the super-resolution algorithm in the decoding step, existence of $\nu:=TV_{m,\epsilon_V}(Vq)$ follows from the observation that $w_m\mu$ is a feasible measure for $TV_{m,\epsilon_V}$ since $\|z\|_2\leq \epsilon_V$. If we instead use BLASSO, it always admits a solution $\nu:=BL_{m,\epsilon_V}(Vq)$. We next bound $\|\nu\|_{\TV}$, which we split into two cases.
	\begin{enumerate}[(a)]
		\item
		TV-min. Since $\nu$ is a minimizer of TV-min and $w_m\mu$ is also feasible, we have 
		\[
		\|\nu\|_{\TV} 
		\leq \|w_m\mu\|_{\TV} 
		\leq \|w\|_\infty \|\mu\|_{\TV}
		\leq C_\beta A. 
		\]
		\item 
		BLASSO. Since $\nu$ is a minimizer of BLASSO, we have
		\begin{align*}
			\|\nu\|_{\TV}
			&\leq \|\nu\|_{\TV}+\frac{1}{2\epsilon_V} \|\calF_m \nu-Vq\|_2^2 
			\leq \|w_m\mu\|_{\TV}+\frac{1}{2\epsilon_V} \|\calF_m (w_m\mu)-Vq\|_2^2 \\
			&\leq \|w\|_\infty \|\mu\|_{\TV} +\frac{1}{2\epsilon_V} \|Vy-Vq\|_2^2 \leq C_\beta A+\epsilon_V.	 
		\end{align*}
	\end{enumerate}
	From here onward, let $\nu$ be the output of either these algorithms, and we have the upper bound, 
	\begin{equation}
	\label{eq:convex0}
	\|\nu\|_{\TV}
	\leq C_\beta A+\epsilon_V.
	\end{equation}

	We can use Theorem \ref{thm:convex} to upper bound $\calE_j^m(w_m\mu,\nu)$ since $w_m\mu\in \calM(\T,\Delta)$ and we assumed that $\Delta\geq 4/(m-1)$. According to Lemma \ref{lem:Elip1} and inequality \eqref{eq:convex0}, we have
	\begin{gather}
		\label{eq:convex1}
		\begin{split}
		\calE_\Lip(w_m\mu,\nu)
		&\leq S \calE_1^m(w_m\mu,\nu) + \|\nu\|_{\TV}^{1/2}  \big( \calE_2^m(w_m\mu,\nu) \big)^{1/2} + \calE_3^m(w_m\mu,\nu) \\
		&\leq C_1 S \epsilon_V + C_2^{1/2} m^{-1} (C_\beta A+\epsilon_V)^{1/2} \epsilon_V^{1/2} + C_3 \epsilon_V \\
		&\leq \big(C_1S +C_2^{1/2} m^{-1}  + C_3\big) \epsilon_V + C_2^{1/2} C_\beta^{1/2} m^{-1} A^{1/2} \epsilon_V^{1/2}, 	
		\end{split}		
	\end{gather}
	where $C_1,C_2,C_3$ are the universal constants in Theorem \ref{thm:convex}. 
	
	Since $\tilde\mu:=D(q)=w_m^{-1}\nu$, applying Lemma \ref{lem:Elip2} with $\psi=w_m$ and combining it with \eqref{eq:convex1} and the choice of parameters \eqref{eq:parameters3}, we have 
	\begin{align*}
	\calE_\Lip(\mu,\tilde\mu)
	&\leq C_\beta^2 |w_m|_\Lip \, \calE_\Lip(w_m\mu,w_m\nu) \\
	&\leq C_\beta^2 m L_{\beta,\lambda} \big( C_1S +C_2^{1/2} m^{-1}  + C_3 \big) \epsilon_V + C_2^{1/2} C_\beta^{5/2} L_{\beta,\lambda} A^{1/2} S \epsilon_V^{1/2}  \\
	&\leq C A S M^{3/2} \lambda^{1/2} K^{-\lambda} + C A M^{1/4} \lambda^{5/4} K^{-\lambda/2}.
	\end{align*}
	
\end{proof}

This theorem quantifies the error in terms of $\calE_\Lip$ primarily for mathematical elegance and convenience. Indeed, under the same conditions and for the same decoder-encoder described in Theorem \ref{thm:convex}, we have the identical error estimate in terms of $\calE_j^m$. The following theorem was fully proved in our accompanying proceeding \cite{gunturk2019high}. 

\begin{theorem}
	\label{thm:quanconvex2}
	There exist universal constants $C_1,C_2,C_3>0$ such that the following hold. For any $A>0$ and any integers $K\geq 2$, $\lambda\geq 1$, $m\geq 257$, let $M=\lambda m$ and $(Q,D)$ be the quantizer-decoder pair summarized in Section \ref{sec:quan} that uses either $TV_{m,\epsilon_V}$ or $BL_{m,\epsilon_V}$ as the super-resolution algorithm in the decoding step.
	
	For any $\Delta\geq 4/(m-1)$ and $\mu\in \calM(\T,\Delta)$ with $\|\mu\|_{\TV}\leq A$, given the quantized Fourier measurements $q:=Q(\calF_M\mu)$, there is an output $\tilde \mu:= D(q)$ that satisfies the error bounds
	\begin{align*}
		\calE_1^m(\mu,\tilde\mu)
		& \leq C_1 A M^{1/2}\lambda^{1/2} K^{-\lambda} + C_1 (1+ M^{1/2}\lambda^{1/2} K^{-\lambda})^{1/2} A M^{1/4}\lambda^{5/4} K^{-\lambda/2},  \\
		\calE_2^m(\mu,\tilde\mu) 
		& \leq C_2 A \lambda K^{-\lambda}, \\
		\calE_3^m(\mu,\tilde\mu) & \leq C_3 A M^{1/2} \lambda^{1/2} K^{-\lambda} \label{eq:arec3}.
	\end{align*}
\end{theorem}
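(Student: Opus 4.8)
The plan is to transfer the robustness bounds of Theorem~\ref{thm:convex}, which govern the pair $(w_m\mu,\nu)$, over to the pair $(\mu,\tilde\mu)=(\mu,w_m^{-1}\nu)$ that the decoder actually returns. Write $\mu=\sum_{j=1}^S a_j\delta_{t_j}$ and $\nu=\sum_{k=1}^{\tilde S}\tilde b_k\delta_{\tilde t_k}$. The decisive structural fact is that multiplication by $w_m$ (respectively $w_m^{-1}$) preserves supports, so $\mu$ and $w_m\mu$ share the support $\{t_j\}$ while $\nu$ and $\tilde\mu$ share the support $\{\tilde t_k\}$. Since the neighborhood index sets in \eqref{eq:Ij} depend only on the supports, we have $\calI_j^m(\mu,\tilde\mu)=\calI_j^m(w_m\mu,\nu)=:\calI_j^m$ and likewise $\calI_0^m(\mu,\tilde\mu)=\calI_0^m(w_m\mu,\nu)$. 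Exactly as in the proof of Theorem~\ref{thm:quanconvex}, because $w_m\mu\in\calM(\T,\Delta)$, $\Delta\geq 4/(m-1)$ and $\|Vy-Vq\|_2\leq\epsilon_V$, Theorem~\ref{thm:convex} (applied with $m$ measurements) yields $\calE_1^m(w_m\mu,\nu)\leq C_1\epsilon_V$, $\calE_2^m(w_m\mu,\nu)\leq C_2 m^{-2}\epsilon_V$ and $\calE_3^m(w_m\mu,\nu)\leq C_3\epsilon_V$, together with $\|\nu\|_\TV\leq C_\beta A+\epsilon_V$ from \eqref{eq:convex0}.

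The bounds for $\calE_2^m$ and $\calE_3^m$ will follow directly. Each amplitude $\tilde b_k$ of $\nu$ becomes $\tilde b_k/w_m(\tilde t_k)$ in $\tilde\mu$, and the lower bound $|w_m|\geq C_\beta^{-1}$ in \eqref{eq:weights} gives $|\tilde b_k/w_m(\tilde t_k)|\leq C_\beta|\tilde b_k|$. Since the index sets coincide, this yields $\calE_2^m(\mu,\tilde\mu)\leq C_\beta\,\calE_2^m(w_m\mu,\nu)$ and $\calE_3^m(\mu,\tilde\mu)\leq C_\beta\,\calE_3^m(w_m\mu,\nu)$. I would then substitute the estimates above, together with $C_\beta\leq 7$ and $\epsilon_V\leq \mathrm{e}A(2M)^{1/2}\lambda^{-1/2}(\lambda+1)K^{-\lambda}$ from \eqref{eq:parameters3}. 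For $\calE_2^m$ the point is that $m=M/\lambda$ makes the prefactor $m^{-2}=\lambda^2 M^{-2}$, and since $\lambda\leq M$ one has $\lambda^{3/2}(\lambda+1)M^{-3/2}=\lambda(\lambda/M)^{3/2}(1+\lambda^{-1})\lesssim\lambda$, producing $\calE_2^m(\mu,\tilde\mu)\leq C_2 A\lambda K^{-\lambda}$; for $\calE_3^m$ the factor $\lambda^{-1/2}(\lambda+1)\lesssim\lambda^{1/2}$ gives $\calE_3^m(\mu,\tilde\mu)\leq C_3 A M^{1/2}\lambda^{1/2}K^{-\lambda}$.

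The amplitude error $\calE_1^m$ is the crux, and it is where the support perturbation must be accounted for. Fixing $j$, I would split
\[
a_j-\sum_{k\in\calI_j^m}\frac{\tilde b_k}{w_m(\tilde t_k)}
=\frac{1}{w_m(t_j)}\Big(a_j w_m(t_j)-\sum_{k\in\calI_j^m}\tilde b_k\Big)
+\sum_{k\in\calI_j^m}\tilde b_k\Big(\frac{1}{w_m(t_j)}-\frac{1}{w_m(\tilde t_k)}\Big).
\]
The first term is bounded by $C_\beta\,\calE_1^m(w_m\mu,\nu)\leq C_\beta C_1\epsilon_V$ using $|w_m|\geq C_\beta^{-1}$ and the fact that $a_j w_m(t_j)$ is the amplitude of $w_m\mu$ at $t_j$. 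For the second term I would invoke the Lipschitz bound $|w_m^{-1}(t_j)-w_m^{-1}(\tilde t_k)|\leq L_{\beta,\lambda}m\,|t_j-\tilde t_k|_\T$ followed by Cauchy--Schwarz on $\calI_j^m$,
\[
\sum_{k\in\calI_j^m}|\tilde b_k|\,|t_j-\tilde t_k|_\T
\leq\|\nu\|_\TV^{1/2}\Big(\sum_{k\in\calI_j^m}|\tilde b_k|\,|t_j-\tilde t_k|_\T^2\Big)^{1/2}
\leq\|\nu\|_\TV^{1/2}\big(\calE_2^m(w_m\mu,\nu)\big)^{1/2}.
\]
The key cancellation is that the factor $m$ from the Lipschitz constant meets the $m^{-1}$ hidden in $(\calE_2^m)^{1/2}\leq C_2^{1/2}m^{-1}\epsilon_V^{1/2}$, leaving
\[
\calE_1^m(\mu,\tilde\mu)\leq C_\beta C_1\epsilon_V+L_{\beta,\lambda}C_2^{1/2}(C_\beta A+\epsilon_V)^{1/2}\epsilon_V^{1/2}.
\]
Substituting $C_\beta\leq 7$, $L_{\beta,\lambda}\leq 12\pi\lambda$ and the bound on $\epsilon_V$, the first term contributes $\lesssim A M^{1/2}\lambda^{1/2}K^{-\lambda}$, while the second contributes $\lesssim A(1+M^{1/2}\lambda^{1/2}K^{-\lambda})^{1/2}M^{1/4}\lambda^{5/4}K^{-\lambda/2}$, the square-root factor arising precisely from $(C_\beta A+\epsilon_V)^{1/2}$. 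I expect this second term to be the main obstacle: it is the only place where the support localization error feeds back into the amplitude estimate, and carefully tracking the powers of $M$ and $\lambda$ through the Lipschitz constant, the Cauchy--Schwarz step, and the final substitution of $\epsilon_V$ is exactly what produces the dominant $K^{-\lambda/2}$ rate that also drives Theorem~\ref{thm:quanconvex}.
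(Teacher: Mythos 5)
Your proposal is correct, and it follows the same strategy the paper uses: existence and the bound $\|\nu\|_{\TV}\leq C_\beta A+\epsilon_V$ exactly as in the proof of Theorem \ref{thm:quanconvex}, Theorem \ref{thm:convex} applied to the pair $(w_m\mu,\nu)$, and then a transfer to $(\mu,w_m^{-1}\nu)$ through the bounds \eqref{eq:weights} and the Lipschitz estimate on $w_m^{-1}$ --- your term-by-term handling of $\calE_1^m,\calE_2^m,\calE_3^m$ (including the Cauchy--Schwarz step, which mirrors the proof of Lemma \ref{lem:Elip1}) is precisely the $\calE_j^m$-level version of the argument, which the paper defers to the accompanying proceeding. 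The exponent bookkeeping ($L_{\beta,\lambda}m$ canceling the $m^{-1}$ from $(\calE_2^m)^{1/2}$, and $\lambda^{-1/2}(\lambda+1)\lesssim\lambda^{1/2}$, $\lambda\leq M$) checks out and reproduces all three stated bounds.
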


\commentout{
\subsection{Improved reconstruction in the small noise regime}

The robustness results for TV-min and BLASSO have weak assumptions because they do not require an upper bound on the noise energy nor structural assumptions on the noise, and moreover, there are no constraints on the amplitudes of the measure. Due to their generality, their conclusions are not as strong as we would like: the exact and recovered number of atoms might not be equal, and the localization of the recovered measures do not decay in the noise level. 

In the small noise regime where the noise energy tends to zero, it is possible to obtain a stronger recovery statement provided that the measure satisfies certain assumptions. A sufficient assumption on $\mu$ is called the non-degenerate source condition introduced in \cite{duval2015exact}, which we explain below. We first concentrate on the noiseless TV-min, see \eqref{eq:TVnoiseless}. One dual formulation is
\[
\arg\max_{c\in\C^M}\Big\{ \text{Re}\(\sum_{k=0}^{M-1} c_k \overline{\hat\mu(k)}\)\colon \sup_{x\in \T} \Big| \sum_{m=0}^{M-1} c_k e^{2\pi i mx} \Big|\leq 1 \Big\}. 
\]
If $c$ is a solution to this problem, we call the function $f(x)=\sum_{m=0}^{M-1} c_k e^{2\pi i mx}$ a dual polynomial. While there are generally numerous dual polynomials, there is a unique dual polynomial with minimal $L^2(\T)$ norm, which we denote by $f_*$. 

\begin{definition}
	We say $\mu$ satisfies the {\it non-degenerate source condition} (NDSC) for $M$ samples if $f_*''(t_j)\not=0$ for each $t_j\in \supp(\mu)$ and $|f_*(t)|<1$ for all $t\not\in \supp(\mu)$. 	
\end{definition}

\begin{remark}
	In principle, there could be a $\mu$ that satisfies NDSC for a certain $M$ but not for other values; however, numerical simulations in \cite{duval2015exact} suggest that the NDSC implicitly places a lower bound on $\Delta(\mu)$ and that as the number of measurements increases, the more likely $\mu$ is to satisfy NDSC for those number of samples. 
\end{remark}

\begin{theorem}[{\cite[Theorem 2 and Proposition 6]{duval2015exact}}]
	\label{thm:duvalpeyre}
	Let $\mu=\sum_{j=1}^S a_j\delta_{t_j}$ be a real discrete measure and assume that it satisfies the non-degenerate source condition for $M$ samples. There exist $C>0$ and sufficiently small $0<\epsilon_1,\epsilon_2\leq 1$ such that the following hold. 
	
	For any regularization parameter $\tau$ such that $0\leq\tau\leq\epsilon_1$, and any noise vector satisfying $\|z\|_2\leq \epsilon_2\tau$, given $\tilde y=\calF_M\mu+z$, there is a unique solution $\tilde\mu:=BL_\tau(\tilde y)$ such that $\tilde\mu=\sum_{j=1}^S \tilde a_j\delta_{\tilde t_j}$ and there is a permutation $\pi$ on $\{1,\dots,S\}$ such that
	\[
	\sign(\tilde a_{\pi(j)})=\sign(a_j),\quad
	|t_j-\tilde t_{\pi(j)}|_\T \leq C \|z\|_2,\quad
	|a_j-\tilde a_{\pi(j)}|\leq C \|z\|_2, \quad
	j=1,\dots,S. 
	\]
\end{theorem}

Theorem \ref{thm:duvalpeyre} is proved in \cite{duval2015exact} using the implicit function theorem. The constants $C,\epsilon_1,\epsilon_2$ depend on the specific measure $\mu$ and number of samples $M$. In practice, although we have choice over the regularization parameter $\tau$, we must select it without specific information about $\mu$. For this reason, we define the set 
\[
\calN(m,\tau,\epsilon)
=\{\mu\in\calM(\T)\colon \mu\text{ satisfies NDSC with $m$ samples and } \epsilon\leq \epsilon_2 \tau\leq\epsilon_1\epsilon_2\},
\]
where $\epsilon_1$ and $\epsilon_2$ are the constants promised by Theorem \ref{thm:duvalpeyre}. For convenience, we define a mixed $\ell^\infty+\ell^\infty$ error between two measures $\mu=\sum_{j=1}^S u_j\delta_{s_j}$ and $\nu=\sum_{j=1}^S v_j\delta_{t_j}$ consisting of the same number of atoms by
\begin{equation}
\label{eq:error1}
\calE_{\infty,\infty}(\mu,\nu)
:= \min_{\pi} \max_{j} \( |s_j-t_{\pi(j)}|_\T + |u_j-v_{\pi(j)}| \). 
\end{equation} 
Combining our beta-quantizer with BLASSO, then Proposition \ref{prop:quanconvex} provides us with the following guarantee. 

\begin{proposition}
	\label{prop:quanconvex}
	For any $A>0$ and any integers $K\geq 2$, $\lambda\geq 1$, $m\geq 1$, let $M=\lambda m$ and $(Q,D)$ be the encoder-decoder pair summarized in Section \ref{sec:quan} that uses $BL_{m,\tau}$ for a fixed $\tau>0$ as the super-resolution algorithm in the decoding step. 
	
	For any $\mu\in\calM(\T)$ such that $\|\mu\|_{\TV}\leq A$ and $\mu_V\in \calN(m,\tau,\epsilon_V)$, given the quantized Fourier measurements $q:=Q(\calF_M\mu)$, there is an output $\tilde\mu:=D(q)$ such that 
	\begin{equation*}
	\calE_{\infty,\infty}(\mu,\tilde\mu)
	\leq C \( A M^{1/2} + C A^2 M^{3/2}\) \lambda^{1/2} K^{-\lambda} + C  A^2 M^2 \lambda K^{-2\lambda}.
	\end{equation*}
\end{proposition}

\begin{proof}
	Let $\mu=\sum_{j=1}^S a_j\delta_{t_j}$. We start by recalling some essential properties our encoding-decoding scheme. If $y=\calF_M\mu$, then $\|y\|_\infty\leq A$ and $Vy=\calF_m\mu_V$, where $\mu_V=\sum_{j=1}^S b_j\delta_{t_j}$ is defined in \eqref{eq:muV}. Let $z=Vy-Vq$, which represents the difference between the exact and approximate $m$ Fourier measurements of $\mu_V$, and recall that $\|z\|_2\leq \epsilon_V$. 
	
	It follows from the definition of $\calN(S,m,\tau,\epsilon_V)$ and Theorem \ref{thm:duvalpeyre} that if $\tilde\mu_V:=BL_{m,\tau}(Vq)$, then we have the representation $\tilde\mu_V = \sum_{j=1}^S \tilde b_j \delta_{\tilde t_j}$ and there is a permutation $\pi$ such that 
	\begin{equation}
	\label{eq:help1}
	|t_j-\tilde t_{\pi(j)}|_\T \leq C\epsilon_V,\quad
	|b_j-\tilde b_{\pi(j)}| \leq C\epsilon_V,\quad
	j=1,\dots,S.
	\end{equation}
	To make the presentation simpler, we suppress the permutation $\pi$ since it is fixed at this point. Recall that $\tilde\mu=\sum_{j=1}^S \tilde a_j\delta_{\tilde t_j}$ is the re-weighted $\tilde\mu_V$ as defined in \eqref{eq:tildemu}. We estimate the amplitude errors as follows. For each $j\in\{1,\dots,S\}$, we have
	\begin{align*}
	|a_j-\tilde a_j|
	&=|b_j w_j^{-1}-\tilde b_j\tilde w_j^{-1}| \\
	&\leq |w_j^{-1}| |b_j-\tilde b_j| + |\tilde b_j| \, |w_j^{-1}-\tilde w_j^{-1}|\\
	&\leq C C_\beta\epsilon_V+ L_{\beta,\lambda} m |\tilde b_j|\,  |t_j-\tilde t_j|_\T 
	\leq C C_\beta\epsilon_V +  CL_{\beta,\lambda} m \|\tilde b\|_1 \epsilon_V,
	\end{align*}
	where we used the inequalities \eqref{eq:weights}, \eqref{eq:boundinvw}, and \eqref{eq:help1}. To complete the proof, it remains to bound $\|\tilde b\|_1$. We use the same strategy as in the proof of Theorem \ref{thm:quanconvex} to see that
	\[
	\|\tilde b\|_1
	\leq C_\beta A+\frac{\epsilon_V^2}{2\tau}
	\leq C_\beta A+\frac{\epsilon_2\epsilon_V}{2}
	\leq C_\beta A+\epsilon_V.
	\]
	The proof is completed once we use the upper bounds in \eqref{eq:parameters}.
\end{proof}

The main difference between Theorem \ref{thm:quanconvex} and Proposition \ref{prop:quanconvex} is that the reconstruction accuracy improves from $O(\sqrt{\epsilon_V})=O(\lambda^{1/2} K^{-\lambda})$ to $O(\epsilon_V)=O(\lambda K^{-\lambda})$ if $\mu_V$ satisfies the technical assumptions in the above proposition. Since both results use the same quantization scheme there is no difference in practice -- if $\mu_V$ happens to satisfy these additional assumptions in the proposition, then we expect to see improved reconstruction accuracy. 
}

\section{Noise-shaping quantization for subspace methods}
\label{sec:subspace}

\subsection{Preliminaries} 

{\it Subspace methods} refer to a collection of algorithms introduced over a quarter century ago. Although this paper focuses on ESPRIT, many aspects of our results can be adapted to other subspace methods such as MUSIC and matrix pencil method. 


These methods exploit algebraic properties of two matrices. For a given integer $M>0$, the Fourier matrix associated with a discrete set $T=\{t_k\}_{k=1}^S$ is defined as
\begin{equation}
\label{eq:vander}
\Phi_M
:=\Phi_M(T)
:=\Big( e^{-2\pi i jt_k}\Big)_{0\leq j\leq M-1, \, 1\leq k\leq S}.
\end{equation}
With this notation in place, if $\mu=\sum_{j=1}^S a_j\delta_{t_j}$, then 
\[
\calF_M\mu=\Phi_M(T)a.
\]
Throughout, we fix an integer $N$ satisfying the inequalities $S\leq N-1\leq M-S$. The Hankel matrix of a vector $u\in \C^M$ is defined as
\[
H_N(u)
:=\Big( u_{j-k} \Big)_{0\leq j\leq N-1, \, 0\leq k\leq M-N}. 
\]
The Hankel matrix of the Fourier data $\calF_M\mu$ enjoys the decomposition
\[
H_N(\calF_M\mu)
=\Phi_N(T) D_a \Phi_{M-N+1}(T)^t,
\quad\text{where}\quad D_a:=\text{diag}(a_1,\dots,a_S). 
\]


\commentout{We first describe ESPRIT in the noiseless case, where we have access to the clean Fourier data $y$ and number of atoms $S$. It computes the singular value decomposition (SVD) of $H(y)$ to obtain the representation
\[
H(y)=
[U \ U_\perp ]\, \Sigma \, [V \ V_\perp]^*, 
\]
where $\Sigma$ has exactly $S$ non-zero singular values sorted in descending order and $U$ and $V$ are defined to be the first $S$ columns of the left and right singular spaces. While the SVD is not unique, the corresponding singular spaces are. The columns of $U$ forms an orthonormal basis for the image of $\Phi_N(T)$. Let $U_0$ (respectively $U_1$) be the matrix containing the first (respectively last) $N-1$ rows of $U$. Since $U$ and $\Phi_N(T)$ have the same column span, there is an invertible matrix $P$ such that $U_0 = \Phi_{N-1} P$ and $U_1 = \Phi_{N-1} D_T P$, where $D(T)=\diag(e^{-2\pi i t_1},\dots e^{-2\pi it_S})$. It follows from these observations that 
\[
\Psi 
:= U_0^{\dagger} U_1
= P^{-1} D_T P\in\C^{S\times S}. 
\]
Hence, the eigenvalues of $\Psi$ are exactly $\{e^{-2\pi i t_j}\}_{j=1}^S$, so finding the eigenvalues of $\Psi$ provides us with the support set $T$ if we had access to the noiseless Fourier measurements $y$. 	

Subspace methods such as ESPRIT require knowledge of the total number of atoms $S$, which is either known or estimated by a separate process. We proceed to give a brief description of ESPRIT in the noisy case (see \cite{li2020super} for further details), where we only have access to the noisy Fourier data $\tilde y=y+z$. The Hankel matrix of $\tilde y$ can be written as
\[
H_N(\tilde y)
=H_N(y)+H_N(z). 
\]
The algorithm first computes the SVD of $H_N(\tilde y)$ and obtain a $N\times S$ matrix $\tilde U$ whose columns form an orthonormal basis for the subspace corresponding to the top $S$ singular values of $H_N(\tilde y)$. Letting $\tilde U_0$ (respectively $\tilde U_1$) be the submatrix containing the first (respectively last) $N-1$ rows of $\tilde U$, the algorithm next finds the eigenvalues of
\[
\tilde\Psi
:=\tilde U_0^\dagger \tilde U_1. 
\]
Extracting the arguments yields a collection of values $\tilde T\subset\T$. We will see that when the noise energy is sufficiently small, $\tilde T$ consists of exactly $S$ distinct elements which we denote by $\{\tilde t_j\}_{j=1}^S$. We define the matching distance between $T$ and $\tilde T$ by
\[
=\min_{\pi} \max_{j} |t_j-\tilde t_{\pi(j)}|_\T.
\]
Here, the minimum is taken over all permutations $\pi$ on $\{1,\dots,S\}$. }

Subspace methods such as ESPRIT take noisy Fourier data $y$ corresponding to some measure $\mu$ and the total number of atoms $S$. We do not explain the mechanisms of ESPRIT and refer the reader to  \cite{li2020super} for further details. The algorithm first outputs a discrete set $\tilde T=\{\tilde t_j\}_{j=1}^S$. If the noise energy is sufficiently small, then 
\[
\min_{\pi} \max_{j} |t_j-\tilde t_{\pi(j)}|_\T
\leq \frac{\Delta(\mu)}{2},
\]
where the minimum is taken over all permutations $\pi$ on $\{1,\dots,S\}$. The algorithm next estimates the amplitudes via least squares: $\tilde a:=\Phi_M(\tilde T)^\dagger y$. The ESPRIT decoder $ES_{M,N,S}$ provides us with a discrete measure whose support is $\tilde T$ and its amplitudes $\tilde a$, namely,
\[
ES_{M,N,S}(y)
:=\sum_{j=1}^S \tilde a_{j} \delta_{\tilde t_j}. 
\]

Several papers have derived error bounds for ESPRIT in terms of the smallest amplitude of the measure, noise energy, and $\sigma_S(\Phi_M)$ under appropriate separation assumptions (e.g. \cite{aubel2016deterministic,fannjiang2016compressive,li2020super}). The important feature of these estimates is that the support and amplitude reconstruction errors are linear in the noise energy $\|z\|_2$. 

To give a clean presentation, we define a mixed $\ell^\infty+\ell^2$ error between two measures $\rho=\sum_{j=1}^S u_j\delta_{r_j}$ and $\nu=\sum_{j=1}^S v_j\delta_{s_j}$, where are assumed to have the same number of atoms. We define the best permutation\footnote{When the noise energy is sufficiently small, there is a unique optimal permutation. In the subsequent results, our assumptions guarantee that this is the case.} between their supports as,
\[
\pi^*
:=\argmin_{\pi} \max_j |r_j-s_{\pi(j)}|_\T. 
\]
For this optimal permutation, we define 
\begin{equation}
\label{eq:error2}
\calE_{\infty,2}(\rho,\nu)
:= \max_{j} |r_j-s_{\pi^* (j)}|_\T + \frac{1}{\|u\|_{2}}\( \sum_{j=1}^S \big|u_j-v_{\pi^*(j)}\big|^2 \)^{1/2}.  
\end{equation}
Here, the $\|u\|_2^{-1}$ factor is a natural normalization constant so that both the support error (first term in \eqref{eq:error2}) and amplitude error (second term in \eqref{eq:error2}) are dimensionless quantities. 

We also define the class of measures
\[
\calM(\T,\Delta,A,B,S), 
\]
as the collection of all $\nu\in \calM(\T,\Delta)$ such that $\nu$ has $S$ atoms, $\|\nu\|_{\TV}\leq A$, and the absolute value of its amplitudes are bounded below by $B$. The following theorem combines previously established results in \cite{li2020super,moitra2015matrixpencil}.

\begin{theorem}
	\label{thm:LL}
	For any integers $S\geq 1$, even $M\geq 8S$, and $N:=M/2+1$, and for any $0<B\leq A$ and $\Delta\geq 4/(M-1)$, given noisy Fourier measurements $y=\calF_M\mu+z$ of $\mu\in\calM(\T,\Delta,A,B,S)$ where
	\begin{equation}
	\label{eq:noisecond}
	\|z\|_2\leq \frac{\Delta BM}{1280 S^2},
	\end{equation} 
	the output $\tilde\mu:=ES_{M,N,S}(y)$ has precisely $S$ atoms and satisfies
	\begin{equation*}
	\calE_{\infty,2}(\mu,\tilde\mu)
	\leq 320 B^{-1} S^2 M^{-1} \|z\|_2 + 3400 B^{-1} S^{5/2} \|z\|_2. 
	\end{equation*}
\end{theorem}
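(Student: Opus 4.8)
The plan is to assemble the theorem from three essentially independent ingredients, each taken from the cited literature and then combined with careful constant tracking: a quantitative two-sided bound on the singular values of the Vandermonde factors of the clean Hankel matrix (from \cite{moitra2015matrixpencil}), a subspace-perturbation bound that converts the Hankel-domain noise into a support error (the ESPRIT analysis of \cite{li2020super}), and a least-squares analysis for the amplitudes. The explicit constant in the noise condition \eqref{eq:noisecond} and the two constants in the final bound on $\calE_{\infty,2}$ emerge from propagating these three pieces.

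First I would control the conditioning of the clean Hankel matrix. Using the decomposition $H_N(\calF_M\mu)=\Phi_N(T)D_a\Phi_{M-N+1}(T)^t$ together with $\min_j|a_j|\geq B$, and the fact that the row space of $H_N(\calF_M\mu)$ lies in that of $\Phi_{M-N+1}(T)^t$, one obtains
\[
\sigma_S(H_N(\calF_M\mu))\geq \sigma_S(\Phi_N(T))\,B\,\sigma_S(\Phi_{M-N+1}(T)).
\]
Because $N=M/2+1$, both Vandermonde factors have on the order of $M/2$ rows, so the separation hypothesis $\Delta\geq 4/(M-1)$ lets me invoke the matrix-pencil singular-value estimate of \cite{moitra2015matrixpencil} to conclude $\sigma_S(\Phi_N(T)),\sigma_S(\Phi_{M-N+1}(T))\gtrsim\sqrt{M}$, whence $\sigma_S(H_N(\calF_M\mu))\gtrsim MB$. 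The same estimate supplies the matching lower bound $\sigma_S(\Phi_M(\tilde T))\gtrsim\sqrt{M}$ that I will need in the amplitude step, valid provided $\tilde T$ stays separated.

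Second, I would pass the noise into the Hankel domain. Since each coordinate $z_\ell$ appears at most $\min(N,M-N+1)=M/2$ times in $H_N(z)$, the Frobenius (hence spectral) norm satisfies $\|H_N(z)\|_2\leq\sqrt{M/2}\,\|z\|_2$. Feeding this together with the lower bound $\sigma_S(H_N(\calF_M\mu))\gtrsim MB$ into the ESPRIT perturbation analysis of \cite{li2020super}, which is a Wedin-type $\sin\Theta$ bound transferring the perturbation of the $S$-dimensional signal subspace spanned by $\tilde U$ into a displacement of the eigenvalues (hence arguments) of $\tilde U_0^\dagger\tilde U_1$, produces the support estimate. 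Here the noise hypothesis \eqref{eq:noisecond} is exactly what forces the relative perturbation $\|H_N(z)\|_2/\sigma_S(H_N(\calF_M\mu))$ below the threshold guaranteeing both that the spectral gap is clean, so that exactly $S$ singular values are retained and $\tilde\mu$ has precisely $S$ atoms, and that $\max_j|t_j-\tilde t_{\pi^*(j)}|_\T\leq\Delta/2$, so that the optimal matching $\pi^*$ in \eqref{eq:error2} is unique and $\tilde T$ inherits a separation comparable to $\Delta$. Tracking the constants through this step yields the first term $320\,B^{-1}S^2M^{-1}\|z\|_2$.

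Finally, for the amplitudes $\tilde a=\Phi_M(\tilde T)^\dagger y$, I would write $\tilde a-a=\Phi_M(\tilde T)^\dagger z+\big(\Phi_M(\tilde T)^\dagger\Phi_M(T)-I\big)a$ and bound the two terms: the first by $\|\Phi_M(\tilde T)^\dagger\|_2\|z\|_2\lesssim M^{-1/2}\|z\|_2$ (using that $\tilde T$ is $\Delta/2$-close and hence still separated), and the second by expanding $\Phi_M(T)-\Phi_M(\tilde T)$ to first order in the support error, where differentiating $e^{-2\pi ijt}$ contributes a factor of order $M$ and the sum over the $S$ atoms supplies the remaining powers of $S$. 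Normalizing by $\|u\|_2=\|a\|_2\geq B\sqrt{S}$ and combining with the support bound gives the second term $3400\,B^{-1}S^{5/2}\|z\|_2$, and adding the two contributions yields the claimed bound on $\calE_{\infty,2}(\mu,\tilde\mu)$. I expect the main obstacle to be the constant bookkeeping in the subspace step: one must translate the $\sin\Theta$ bound on the signal subspace into a bound on individual node displacements (requiring control of $\sigma_S$ of the shifted submatrices $U_0,U_1$ and an argument that the recovered eigenvalues stay near the unit circle) and verify that the specific constant $1/1280$ in \eqref{eq:noisecond} keeps every linearization—both here and in the amplitude mismatch term—within its valid regime.
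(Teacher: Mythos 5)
Your proposal is correct and takes essentially the same approach as the paper: bound the Hankel noise by a $\sqrt{M/2}$-type factor, verify that \eqref{eq:noisecond} activates the ESPRIT support-error theorem of \cite{li2020super} (giving exactly $S$ atoms, a unique matching, and the $320\,B^{-1}S^{2}M^{-1}\|z\|_2$ term), then use the inherited $\Delta/2$-separation of $\tilde T$ together with \cite[Theorem 1.1]{moitra2015matrixpencil} to bound $\sigma_S(\Phi_M(\tilde T))$ and run the identical least-squares decomposition $a-\tilde a=\Phi_M(\tilde T)^\dagger\big(\Phi_M(\tilde T)-\Phi_M(T)\big)a-\Phi_M(\tilde T)^\dagger z$ with a mean-value/Frobenius estimate to obtain the $3400\,B^{-1}S^{5/2}\|z\|_2$ term. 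The only difference is packaging: the paper quotes \cite[Theorem 4]{li2020super} as a black box, with its explicit constant $20S^2B^{-1}\cdot 2^{5/2}\sqrt{N}(N-1)^{-3/2}$, rather than re-deriving the Wedin-type subspace analysis you sketch, so the constant $320$ falls out immediately without the bookkeeping you anticipate.
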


\begin{proof}
	
	Let $\mu=\sum_{j=1}^S a_j\delta_{t_j}\in M(\T,\Delta,A,B,S)$. We first note that $\|H_N(z)\|_2\leq \sqrt{N} \|z\|_2$ and $(N-1)/(2S)\geq 2$. We check that the noise assumption \eqref{eq:noisecond} implies that the assumptions in \cite[Theorem 4]{li2020super} hold. Consequently, the output of ESPRIT is a measure $\tilde\mu=\sum_{j=1}^S \tilde a_j \delta_{\tilde t_j}$ such that
	\begin{equation}
		\label{eq:terror}
		\min_\pi \max_j \big|t_j-\tilde t_{\pi(j)}\big|_\T
		\leq \frac{20 S^2}{B} \frac{2^{5/2}\sqrt N}{(N-1)^{3/2}}\, \|H_N(z)\|_2
		\leq \frac{320 S^2}{B M} \, \|z\|_2. 
		\end{equation}
	The optimal permutation is unique. After re-indexing $\tilde\mu$ if necessary, we can assume that $\pi$ is the identity map.
	
	As a consequence of inequality \eqref{eq:terror}, the noise assumption \eqref{eq:noisecond}, and that $\Delta(\mu)\geq \Delta\geq 4/(M-1)$, we have $\Delta(\tilde \mu)\geq \Delta/2\geq 2/(M-1)$. Thus we invoke \cite[Theorem 1.1]{moitra2015matrixpencil} to obtain the lower bound $\sigma_S^2(\tilde \Phi_M) \geq (M-1)/2$. We first see that 
	\[
	\|a-\tilde a\|_2
	=\big\|\tilde\Phi_M^\dagger (\tilde \Phi_M - \Phi_M)a-\tilde\Phi_M^\dagger z\big\|_2
	\leq \sqrt{\frac{2}{M-1}} \( \|a\|_2 \, \big\|\tilde \Phi_M-\Phi_M\big\|_F +\|z\|_2\).
	\]
	A mean value theorem argument shows that 
	\begin{align*}
	\big\|\tilde \Phi_M-\Phi_M\big\|_F
	&= \(\sum_{j=1}^S \sum_{k=0}^{M-1} \big|e^{2\pi ikt_j}-e^{2\pi ik\tilde t_j} \big|^2\)^{1/2} \\
	&\leq \(\sum_{j=1}^S \sum_{k=0}^{M-1} (2\pi k)^{2}|t_j-\tilde t_j|_\T^2\)^{1/2} \leq 640\pi S^2 B^{-1}S^{1/2} \(\frac{M-1}{3}\)^{1/2} \|z\|_2 .  
	\end{align*}
	Combining the above observations, we have that
	\[
	\|a-\tilde a\|_2
	\leq 1280\pi \sqrt{\frac{2}{3}} B^{-1}S^{5/2} \|a\|_2 \|z\|_2. 
	\]
\end{proof}

\begin{remark} 
	Other subspace methods such the matrix pencil method \cite{moitra2015matrixpencil}, MUSIC \cite{liao2016music,li2021stable,li2019conditioning} and another form of ESPRIT \cite{fannjiang2016compressive} enjoy similar robustness properties, but with possibly different implicit constants. These alternative algorithms can also be combined with our beta-quantizer in the same manner, but this paper concentrates on ESPRIT. 
\end{remark} 

This robustness result immediately implies an upper bound on the reconstruction error for when $\calF_M\mu$ is quantized to the alphabet $\calA_K$ by MSQ. Indeed, if $z$ is the vector of quantization errors, then $\|z\|_2\leq \sqrt{M}\|z\|_\infty$. When $K$ is sufficiently large depending on $M$ and $S$, this theorem shows that the amplitude error is at most $O(\lambda^{1/2} K^{-1})$, where the implicit constant is independent of $\lambda$ and $K$.

\subsection{Noise-shaping quantization for ESPRIT}

We show how to combine the our proposed quantization method with ESPRIT to derive an encoding-decoding scheme that offers a significantly better reconstruction accuracy than MSQ. We first quantify how $\calE_{\infty,2}$ is affected by multiplication. 

\begin{lemma}
	\label{lem:Elip3}
	Fix any Lipschitz $\psi$ for which there exists $C\geq 1$ such that $C^{-1}\leq |\psi(t)|\leq C$ for all $t\in\T$. For any atomic $\rho,\nu\in\calM(\T)$ with the same number of atoms,
	\begin{align*}
		\big(1+C |\psi|_{\Lip} + C^2\big)^{-1} \calE_{\infty,2}(\psi\rho,\psi\nu)
		\leq 
		\calE_{\infty,2}(\rho,\nu)
		\leq  \big(1+C |\psi|_{\Lip} + C^2\big) \calE_{\infty,2}(\psi\rho,\psi\nu).
	\end{align*}
\end{lemma}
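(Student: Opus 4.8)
The plan is to exploit the fact that multiplication by $\psi$ leaves the \emph{supports} of $\rho$ and $\nu$ untouched (since $|\psi|\geq C^{-1}>0$ means $\psi$ never vanishes). Consequently the support-matching problem defining $\pi^*$ is literally identical for the pair $(\rho,\nu)$ and for $(\psi\rho,\psi\nu)$, so both errors share the same optimal permutation and the same support term $d:=\max_j|r_j-s_{\pi^*(j)}|_\T$ (any tie-breaking in the $\argmin$ is shared between the two pairs). The whole lemma then reduces to comparing the two normalized amplitude terms. Writing $\rho=\sum_j u_j\delta_{r_j}$, $\nu=\sum_j v_j\delta_{s_j}$ and relabelling so that $\pi^*$ is the identity, the amplitudes of $\psi\rho,\psi\nu$ are $\psi(r_j)u_j$ and $\psi(s_j)v_j$.

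The crux is the algebraic splitting
\[
\psi(r_j)u_j-\psi(s_j)v_j=\psi(s_j)(u_j-v_j)+\big(\psi(r_j)-\psi(s_j)\big)\,u_j .
\]
The point of pairing the Lipschitz increment $\psi(r_j)-\psi(s_j)$ with $u_j$ rather than with $v_j$ is that it keeps every estimate anchored to $\|u\|_2$; the alternative pairing would introduce $\|v\|_2$ and force a \emph{product} of the amplitude and support errors, which would destroy the clean linear bound. Using $|\psi|\leq C$ together with $|\psi(r_j)-\psi(s_j)|\leq|\psi|_\Lip\,|r_j-s_j|_\T\leq|\psi|_\Lip\,d$ gives the entrywise estimate $|\psi(r_j)u_j-\psi(s_j)v_j|\leq C|u_j-v_j|+|\psi|_\Lip\,d\,|u_j|$, hence after taking $\ell^2$ norms $\|(\psi(r_j)u_j-\psi(s_j)v_j)_j\|_2\leq C\|u-v\|_2+|\psi|_\Lip\,d\,\|u\|_2$. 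Solving the same identity for $u_j-v_j$ (dividing by $\psi(s_j)$ and using $|1/\psi|\leq C$) yields the reverse entrywise bound $|u_j-v_j|\leq C\,|\psi(r_j)u_j-\psi(s_j)v_j|+C|\psi|_\Lip\,d\,|u_j|$.

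Finally I would convert between the two normalizations using $C^{-1}\|u\|_2\leq\|(\psi(r_j)u_j)_j\|_2\leq C\|u\|_2$, which follows from $C^{-1}\leq|\psi|\leq C$. Dividing the first $\ell^2$ bound by $\|(\psi(r_j)u_j)_j\|_2$ bounds the amplitude term of $\calE_{\infty,2}(\psi\rho,\psi\nu)$ by $C^2\|u-v\|_2/\|u\|_2+C|\psi|_\Lip\,d$, so adding the common support term $d$ gives $\calE_{\infty,2}(\psi\rho,\psi\nu)\leq(1+C|\psi|_\Lip)d+C^2\|u-v\|_2/\|u\|_2\leq(1+C|\psi|_\Lip+C^2)\,\calE_{\infty,2}(\rho,\nu)$, which is the left inequality. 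Dividing the reverse $\ell^2$ bound by $\|u\|_2$ and replacing $1/\|u\|_2$ by $C/\|(\psi(r_j)u_j)_j\|_2$ produces the right inequality in exactly the same way. The main thing to get right is this choice of decomposition, since it is what makes \emph{both} inequalities hold with the same symmetric constant $1+C|\psi|_\Lip+C^2$; naively deriving the right inequality by applying the left one to $1/\psi$ would degrade the Lipschitz factor to $C^3|\psi|_\Lip$, because $|1/\psi|_\Lip\leq C^2|\psi|_\Lip$.
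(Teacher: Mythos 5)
Your proposal is correct and follows essentially the same route as the paper's proof: the same observation that multiplication by the non-vanishing $\psi$ preserves supports (hence the optimal permutation and the support term are shared), the same algebraic splitting $\psi(r_j)u_j-\psi(s_j)v_j=\psi(s_j)(u_j-v_j)+(\psi(r_j)-\psi(s_j))u_j$ applied in both directions, and the same norm equivalence $C^{-1}\big(\sum_j|\psi(r_j)u_j|^2\big)^{1/2}\leq\|u\|_2\leq C\big(\sum_j|\psi(r_j)u_j|^2\big)^{1/2}$, yielding the identical constant $1+C|\psi|_{\Lip}+C^2$.
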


\begin{proof}
	Let $\rho=\sum_{j=1}^S u_j\delta_{r_j}$ and $\nu=\sum_{j=1}^S v_j\delta_{s_j}$. Note that $\rho$ and $\psi\rho$ have the same support and are both $S$-atomic. Likewise for $\nu$ and $\psi\nu$. Thus, the optimal permutation(s) and the support error in both $\calE_\infty(\rho,\nu)$ and $\calE_\infty(\psi\rho,\psi\nu)$ are identical. After re-indexing $\nu$, we can assume the optimal permutation is the identity map. We also note that
	\begin{equation}
	\label{eq:u}
	\frac{1}{C}\big(\sum_{j=1}^S |\psi(r_j) u_j|^2\big)^{1/2}
	\leq \|u\|_2
	\leq C\big(\sum_{j=1}^S |\psi(r_j) u_j|^2\big)^{1/2}. 
	\end{equation}
	
	We first concentrate on the second claimed inequality of this lemma. We have, 
	\begin{align*}
	\(\sum_{j=1}^S |u_j-v_j|^2\)^{1/2}
	&\leq \(\sum_{j=1}^S \Big|u_j- \frac{\psi(r_j)}{\psi(s_j)} u_j \Big|^2\)^{1/2} + \(\sum_{j=1}^S \Big|\frac{\psi(r_j)}{\psi(s_j)} u_j- v_j \Big|^2\)^{1/2}  \\
	&\leq C |\psi|_{\Lip} \(\max_{j} |r_j-s_j|_\T \) \|u\|_2 + C \(\sum_{j=1}^S |\psi(r_j) u_j- \psi(s_j) v_j |^2\)^{1/2}. 
	\end{align*}
	This together with \eqref{eq:u} implies 
	\begin{align*}
	\calE_{\infty,2}(\rho,\nu)
	&\leq \big( 1+ C |\psi|_{\Lip} \big) \(\max_{j} |r_j-s_j|_\T \) + \frac{C}{\|u\|_2} \(\sum_{j=1}^S |\psi(r_j) u_j- \psi(s_j) v_j |^2\)^{1/2} \\
	&\leq \big( 1+ C |\psi|_{\Lip} + C^2 \big) \calE_{\infty,2}(\psi\rho,\psi\nu).
	\end{align*}
	
	For the first claimed inequality, we have
	\begin{align*}
	&\(\sum_{j=1}^S |\psi(r_j)u_j-\psi(s_j)v_j|^2\)^{1/2} \\
	&\quad\leq \(\sum_{j=1}^S |\psi(r_j)u_j-\psi(s_j)u_j|^2\)^{1/2} + \(\sum_{j=1}^S |\psi(s_j)u_j-\psi(s_j)v_j|^2\)^{1/2} \\
	&\quad\leq |\psi|_{\Lip} \(\max_{j} |r_j-s_j|_\T \) \|u\|_2 + C \(\sum_{j=1}^S |u_j-v_j|^2\)^{1/2}. 
	\end{align*}
	This together with \eqref{eq:u} implies
	\begin{align*}
	\calE_{\infty,2}(\psi\rho,\psi\nu)
	&\leq (1+C|\psi|_\Lip)\(\max_{j} |r_j-s_j|_\T \) + \frac{C^2}{\|u\|_2} \(\sum_{j=1}^S |u_j-v_j|^2\)^{1/2}  \\
	&\leq \big( 1+ C |\psi|_{\Lip} + C^2 \big) \calE_{\infty,2}(\rho,\nu).
	\end{align*} 
	
\end{proof}

With this lemma at hand, the following theorem provides an estimate for the reconstruction accuracy of our encoding-decoding scheme when ESPRIT is used as the super-resolution algorithm in the decoding step. 

\begin{theorem}
	\label{thm:quanesprit} 
	There exists a universal constant $C>0$ such that the following hold. Given $S\geq 1$, even $m\geq 8S$, $K\geq 2$, $\lambda\geq 1$, and real numbers $0<B\leq A$, assume that
	\begin{equation}
	\label{eq:cond1}
	(\lambda+1)K^{-\lambda} 
	\leq \frac{B}{3200 {\rm e} A S^2 m^{3/2}}.
	\end{equation}
	Let $M=m\lambda$ and $(Q,D)$ be the encoder-decoder pair summarized in Section \ref{sec:quan} that uses $ES_{m,n,S}$ with parameter $n:=m/2+1$ as the super-resolution algorithm in the decoding step. 
	
	For any $\Delta \geq 4/(m-1)$ and $\mu\in\calM(\T,\Delta,A,B,S)$, given quantized Fourier measurements $q:=Q(\calF_M\mu)$, the output $\tilde\mu:=D(q)$ satisfies
	\begin{equation*}
	\calE_{\infty,2}(\mu,\tilde\mu)
	\leq C A B^{-1} S^{5/2} M^{3/2} \lambda^{1/2} K^{-\lambda}.
	\end{equation*}
\end{theorem}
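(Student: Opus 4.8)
The plan is to mirror the structure of the proof of Theorem \ref{thm:quanconvex}, replacing the convex-method robustness result (Theorem \ref{thm:convex}) with the ESPRIT robustness result (Theorem \ref{thm:LL}), and replacing Lemma \ref{lem:Elip2} with its $\calE_{\infty,2}$-analogue Lemma \ref{lem:Elip3}. The key algebraic facts that make everything work are already established in Section \ref{sec:quan}: if $\mu=\sum_{j=1}^S a_j\delta_{t_j}\in\calM(\T,\Delta,A,B,S)$ and $y=\calF_M\mu$, then $\|y\|_\infty\leq\|\mu\|_{\TV}\leq A$, the condensed measurements satisfy $V\calF_M\mu=\calF_m(w_m\mu)$, and the quantized condensed data obey $\|Vy-Vq\|_2\leq\epsilon_V$. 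So $Vq$ is exactly the Fourier data of the measure $w_m\mu$ corrupted by a noise vector $z:=Vy-Vq$ of energy at most $\epsilon_V$.

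First I would verify that $w_m\mu$ lies in a class to which Theorem \ref{thm:LL} applies and that its noise assumption \eqref{eq:noisecond} holds. Since $w_m\mu=\sum_{j=1}^S a_j w_m(t_j)\delta_{t_j}$ has the same $S$ atoms and the same support as $\mu$, it has separation $\Delta(w_m\mu)=\Delta(\mu)\geq\Delta\geq 4/(m-1)$. Using the two-sided bound \eqref{eq:weights} on $|w_m|$ (namely $C_\beta^{-1}\leq|w_m|\leq C_\beta$ with $C_\beta\leq 7$), the amplitudes of $w_m\mu$ are bounded in magnitude below by $B/C_\beta$ and its total variation is at most $C_\beta A$; hence $w_m\mu\in\calM(\T,\Delta,C_\beta A,B/C_\beta,S)$. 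The noise condition \eqref{eq:noisecond} for this class reads $\epsilon_V\leq \Delta (B/C_\beta) m/(1280\,S^2)$, and I expect the hypothesis \eqref{eq:cond1} together with the bound $\epsilon_V\leq \mathrm{e}A(2M)^{1/2}\lambda^{-1/2}(\lambda+1)K^{-\lambda}$ from \eqref{eq:parameters3} to be precisely what is needed to guarantee this (after absorbing the $C_\beta\leq 7$ and $M=\lambda m$ factors into the universal constants); I would check the arithmetic here carefully since this is where the numerical constant $3200$ and the exponents of $S$ and $m$ in \eqref{eq:cond1} are calibrated.

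Granting the noise bound, Theorem \ref{thm:LL} applied to $\nu:=ES_{m,n,S}(Vq)$ yields an $S$-atomic $\nu$ with
\[
\calE_{\infty,2}(w_m\mu,\nu)
\leq \big(320 (B/C_\beta)^{-1} S^2 m^{-1} + 3400 (B/C_\beta)^{-1} S^{5/2}\big)\,\epsilon_V
\leq C B^{-1} S^{5/2}\epsilon_V,
\]
for a universal $C$. I would then pass from $\calE_{\infty,2}(w_m\mu,\nu)$ to $\calE_{\infty,2}(\mu,\tilde\mu)$ via Lemma \ref{lem:Elip3} with $\psi=w_m$: since $\tilde\mu=w_m^{-1}\nu$ and $\mu=w_m^{-1}(w_m\mu)$, applying the lemma with the measures $w_m^{-1}(w_m\mu)$ and $w_m^{-1}\nu$ (equivalently, reading the lemma in the direction that relates $\calE_{\infty,2}(\mu,\tilde\mu)$ to $\calE_{\infty,2}(w_m\mu,\nu)$) gives a factor $1+C_\beta|w_m|_\Lip+C_\beta^2$. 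Using $C_\beta\leq 7$ and $|w_m|_\Lip\leq L_{\beta,\lambda}\,m\leq 12\pi\lambda m$ from \eqref{eq:parameters3}, this prefactor is $O(\lambda m)=O(M)$.

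Combining the two displays gives
\[
\calE_{\infty,2}(\mu,\tilde\mu)
\leq C\,\lambda m\, B^{-1} S^{5/2}\,\epsilon_V.
\]
Finally I would substitute $\epsilon_V\leq \mathrm{e}A(2M)^{1/2}\lambda^{-1/2}(\lambda+1)K^{-\lambda}$ and $\lambda m=M$, so that $\lambda m\cdot\epsilon_V\lesssim A M\cdot M^{1/2}\lambda^{-1/2}(\lambda+1)K^{-\lambda}\lesssim A M^{3/2}\lambda^{1/2}K^{-\lambda}$ (using $(\lambda+1)\lambda^{-1/2}=O(\lambda^{1/2})$), which collapses to the claimed bound $C A B^{-1} S^{5/2} M^{3/2}\lambda^{1/2}K^{-\lambda}$. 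The main obstacle I anticipate is not any single step but the bookkeeping of constants and exponents: I must make sure that the class-membership shift from $(A,B)$ to $(C_\beta A, B/C_\beta)$ does not damage the form of the final bound, and that the ESPRIT noise hypothesis \eqref{eq:noisecond} for $w_m\mu$ is genuinely implied by \eqref{eq:cond1}. Everything else is a direct concatenation of the two black-box robustness/perturbation lemmas.
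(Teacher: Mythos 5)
Your proposal is correct and follows essentially the same route as the paper's proof: verify $w_m\mu\in\calM(\T,\Delta,C_\beta A, B C_\beta^{-1},S)$, check that \eqref{eq:cond1} implies the ESPRIT noise condition of Theorem \ref{thm:LL} for the condensed data $Vq$, apply that theorem, and transfer back to $\calE_{\infty,2}(\mu,\tilde\mu)$ via Lemma \ref{lem:Elip3} with the $O(\lambda m)=O(M)$ Lipschitz prefactor. The arithmetic you deferred does go through exactly as you anticipated (using $\Delta\geq 4/(m-1)$ and $C_\beta\leq 7$, the check reduces to $2240\sqrt{2}\leq 3200$), which is precisely how the constant $3200$ in \eqref{eq:cond1} is calibrated.
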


\begin{proof}
	
	We start by recalling some essential properties our encoding-decoding scheme. Let $\mu=\sum_{j=1}^S a_j\delta_{t_j}\in \calM(\T,\Delta,A,B,S)$. We have $\|\calF_M\mu\|_\infty\leq \|\mu\|_{\TV}\leq A$ and $V\calF_M\mu=\calF_m(w_m \mu)$. Letting $q:=Q(\calF_M \mu)\in\C^M$, we define the condensed quantization error $z:=Vy-Vq$ and recall that $\|z\|_2\leq \epsilon_V$.

	We first check that the conditions of Theorem \ref{thm:LL} hold for $w_m\mu$ and noisy Fourier coefficients $Vq$. It follows from inequality \eqref{eq:weights} that 
	\[
	\|w_m\mu\|_{\TV}
	\leq \|w\|_\infty \|a\|_1
	\leq A C_\beta,
	\quad\text{and}\quad 
	\min_j |w_m(t_j)a_j|\geq C^{-1}_\beta \min_j |a_j|\geq BC_\beta^{-1}. 	
	\]
	Since $w_m\mu$ and $\mu$ have the same support, we have shown that 
	\[
	w_m\mu\in\calM(\T,\Delta,AC_\beta,BC_\beta^{-1},S).
	\]
	The condition \eqref{eq:cond1} and choice of parameters \eqref{eq:parameters} imply
	\begin{equation*}
	\|z\|_2
	\leq \epsilon_V
	:={\rm e}A \sqrt{2m} (\lambda+1)K^{-\lambda}
	\leq \frac{B\sqrt{2}}{3200S^2m}
	\leq \frac{\Delta}{1280 C_\beta S^2}. 
	\end{equation*}
	This shows that the assumptions of Theorem \ref{thm:LL} are satisfied, so letting $\nu:=ES_{m,n,S}(Vq)$, we have the representation $\nu=\sum_{j=1}^S \tilde b_j \delta_{\tilde t_j}$ and 
	\begin{equation}
	\label{eq:subspace1}
	\calE_\infty(w_m\mu,\nu)
	\leq 320 C_\beta B^{-1} S^2 m^{-1} \epsilon_V + 3400  C_\beta B^{-1} S^{5/2} \epsilon_V.
	\end{equation}
	
	By construction, the output $\tilde\mu:=D(q)$ is defined to satisfy $\tilde\mu= w_m^{-1}\nu$. Combining Lemma \ref{lem:Elip3} applied to $\psi=w_m$, inequality \eqref{eq:subspace1}, and choice of parameters \eqref{eq:parameters3}, we see that 
	\begin{align*}
	\calE_\infty(\mu,\tilde \mu)
	&\leq \big(1+C_\beta |w_m|_\Lip + C_\beta^2 \big) \, \calE_\infty(w_m\mu,\nu) \\
	&\leq \big(1+C_\beta mL_{\beta,\lambda} + C_\beta^2 \big) \big(320 C_\beta B^{-1} S^2 m^{-1} + 3400 C_\beta B^{-1} S^{5/2}  \big) \epsilon_V \\
	&\leq C B^{-1} S^{5/2} M \epsilon_V \\
	&\leq C A B^{-1} S^{5/2} M^{3/2} \lambda^{1/2} K^{-\lambda}. 
	\end{align*}
\end{proof}

\begin{remark}
	Assumption \eqref{eq:cond1} is due to ESPRIT requiring a bound on the noise energy, so any improvements to Theorem \ref{thm:LL} would readily result in weaker assumptions and improved error estimates in Theorem \ref{thm:quanesprit}. This assumption can be interpreted in two ways depending on the circumstances: either the oversampling ratio $\lambda$ is fixed so we need sufficiently many quantization levels $K$, or the number of quantization levels $K$ is not adjustable and we require the over-sampling factor $\lambda$ to be sufficiently large. 
\end{remark}

\section{A general principle}
\label{sec:principle}

Our proposed noise-shaping quantizer was combined with either a convex algorithm or subspace method by following a general template that can be transferred to other robust super-resolution algorithms. Here we explain the main requirements in a general form. 




The first requirement deals with properties of a perhaps non-linear map $\Psi_m\colon \C^m\to\calM(\T)$ and appropriate class of measures $\calM$. Assume that $\Psi_m$ is a left inverse of $\calF_m$ on $w_m\calM$:
\begin{equation}
	\label{eq:principle0}
	\text{for all } \mu\in \calM, \quad
	(\Psi_m\circ \calF_m)(w_m \mu)
	=w_m \mu.
\end{equation}
In particular, we showed that this property holds for when $\Psi_m$ is either a convex or subspace method and for an appropriate $\calM\subset \calM(\T,\Delta)$.

The second requirement concerns properties of a distortion function $\calE$. Assume there is a $C>0$ and $\alpha>0$ such that for any $\mu\in \calM$ and $\xi\in\C^m$, we have 
\begin{equation}
	\label{eq:principle1}
	\calE(w_m\mu,\Psi_m(\xi))
	\leq C \|\calF_m(w_m\mu) - \xi\|_2^\alpha
	= C\|V\calF_M\mu -\xi\|_2^\alpha. 
\end{equation}
This inequality tells us if we view $\xi$ as a perturbation of $\calF_m(w_m\mu)$, then $\Psi_m$ is robust to small perturbations with respect to the distortion function. For convex and subspace methods, we implicitly proved that this inequality holds for $\calE_\Lip$ with $\alpha=1/2$ and $\calE_{\infty,2}$ with $\alpha=1$, respectively. 

The third requirement deals with the robustness of $\calE$ to the re-weighting map $\mu\mapsto w_m\mu$ on $\calM$. Assume there exists a $C>0$ that for any $\mu,\nu\in\calM$, we have
\begin{equation}
	\label{eq:principle2}
	\calE(\mu,\nu)
	\leq C \calE(w_m\mu,w_m\nu). 
\end{equation}
This inequality tells us that the distortion between $\mu$ and $\nu$ can be controlled in terms of the their re-weighted measures. We proved that such an inequality holds for both $\calE_\Lip$ and $\calE_{\infty,2}$ in Lemmas \ref{lem:Elip2} and \ref{lem:Elip3} respectively.

Under these three requirements, we define the map $\Psi_V$ on $\calM$ by $\mu\mapsto w_m^{-1} \Psi_m(\mu)$. We readily verify that $\Psi_V$ is a decoder for $\Phi_V:=V\circ \calF_M$ on $\calM$ because for each $\mu\in\calM$, using \eqref{eq:principle0}, we see that
\[
\Psi_V(\Phi_V \mu)
=w_m^{-1}(\Psi_m(V\calF_M\mu))
=w_m^{-1}(\Psi_m(\calF_m (w_m\mu)))
=w_m^{-1}(w_m \mu)
=\mu.
\]

Let us see what happens if we use this decoder $\Psi_V$ on our beta quantized Fourier coefficients for any $\mu\in\calM$. Recalling our notation that $y=\calF_M\mu$ and $q=Q(y)$, defining $\tilde\mu:=\Psi_V(Vq)$, and using inequalities \eqref{eq:principle1} and \eqref{eq:principle2}, we have
\begin{align*}
\calE(\mu,\tilde\mu)
&\leq C \calE(w_m\mu,\Psi_m(Vq))
\leq C \|Vy-Vq\|_2^\alpha
\leq C \epsilon_V^\alpha.
\end{align*}
Hence, the distortion is controlled in terms of $\epsilon_V$. It is important to mention that the implicit constant $C$ is independent of the quantizer, so in particular, it is independent of both $\epsilon_V$ and $K$.

\section{Numerical results}
\label{sec:num}

\subsection{Methodology}

Theorems \ref{thm:quanconvex} and \ref{thm:quanesprit} are purely deterministic, and thus, they hold for the worst case measure(s). To evaluate their tightness, we search through a vast collection of measures, apply the proposed quantization scheme using TV-min and ESPRIT, compute the error for each one, and take the maximum error over all the simulations. Here, we describe how to select the measures according to a semi-random process. 

Although an explicit formula for the worst case $\mu$ is unknown, it seems intuitive that for fixed separation $\Delta$ and noise energy, the greater the number of atoms $S$, the harder it would be to resolve the measure. We fix $\Delta =0.15$, $m=\lceil 4/\Delta \rceil$, $t_1=0$, and $t_2=\Delta$. For each $j\geq 3$, we select $t_{j+1}=t_j+\Delta+|\eta_j|$, where $\eta_j$ is a Gaussian random variable with mean 0 and standard deviation $\Delta$, and this process terminates at step $S$, when $t_{S+1}>1-\Delta$. The stochastic term $\eta_j$ is included to avoid selecting a set with special number-theoretic properties. This provides us with a generic set $T=\{t_j\}_{j=1}^S$ containing with $S$ points (where $S$ may vary depending on the random draw of the $\eta_j$'s) with minimum separation at least $\Delta$. 

After the support set $T$ is chosen according to this process, we select the amplitudes $a\in\C^S$ such that each $a_j$ is drawn from the uniform distribution on the complex circle of radius $1/S$. Hence, $B=1/S$ and $A=1$. We vary the phases of the entries of $a$ because TV-min performs differently on complex versus positive measures (e.g. \cite{duval2015exact,benedetto2020super}). On the other hand, ESPRIT does not depend on the phases because it only depends on the subspace which are invariant under phase shifts (e.g. \cite{li2020super}). Based on this semi-random process, we obtain a $S$ and complex measure 
$$
\mu=\sum_{j=1}^S a_j \delta_{t_j} \in \calM(\T,\Delta,1,1/S,S). 
$$

The free parameters in our experiments are the over-sampling ratio $\lambda$ and the number of levels $K$ for the real and imaginary components. We construct $100$ measures chosen according to the aforementioned semi-random method. For each one, we compute its exact $M=\lambda m$ Fourier coefficients. We encode the measurements using both MSQ and the proposed beta-encoder, both with the same alphabet, and then decode the quantized measurements, as outlined in Section \ref{sec:beta}. To measure their reconstruction errors, we compute the amplitude error. Finally, for each $\lambda$ and $K$, we find the maximum error over these $100$ trials. 

\subsection{Illustration of the theory}

\begin{figure}[!]
	
	\caption{Reconstruction error for TV-min and ESPRIT using MSQ and $\beta$-quantization. The results are shown as functions of $K$ and $\lambda$. The dashed black lines are visual guides.}
	
	\begin{subfigure}{\textwidth}
		\centering
		\includegraphics[width=0.5\textwidth]{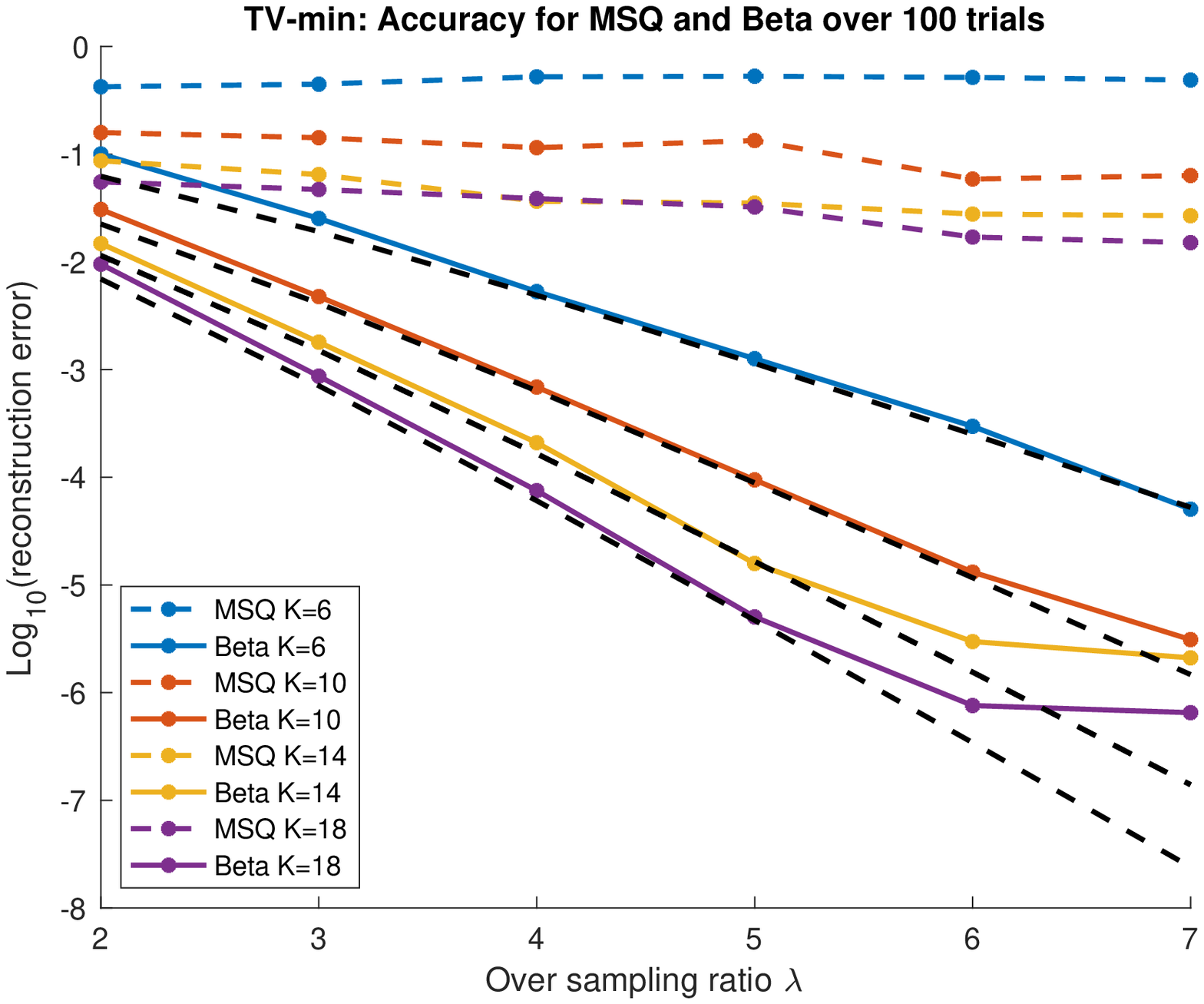}
		\hspace{-2em}
		\includegraphics[width=0.5\textwidth]{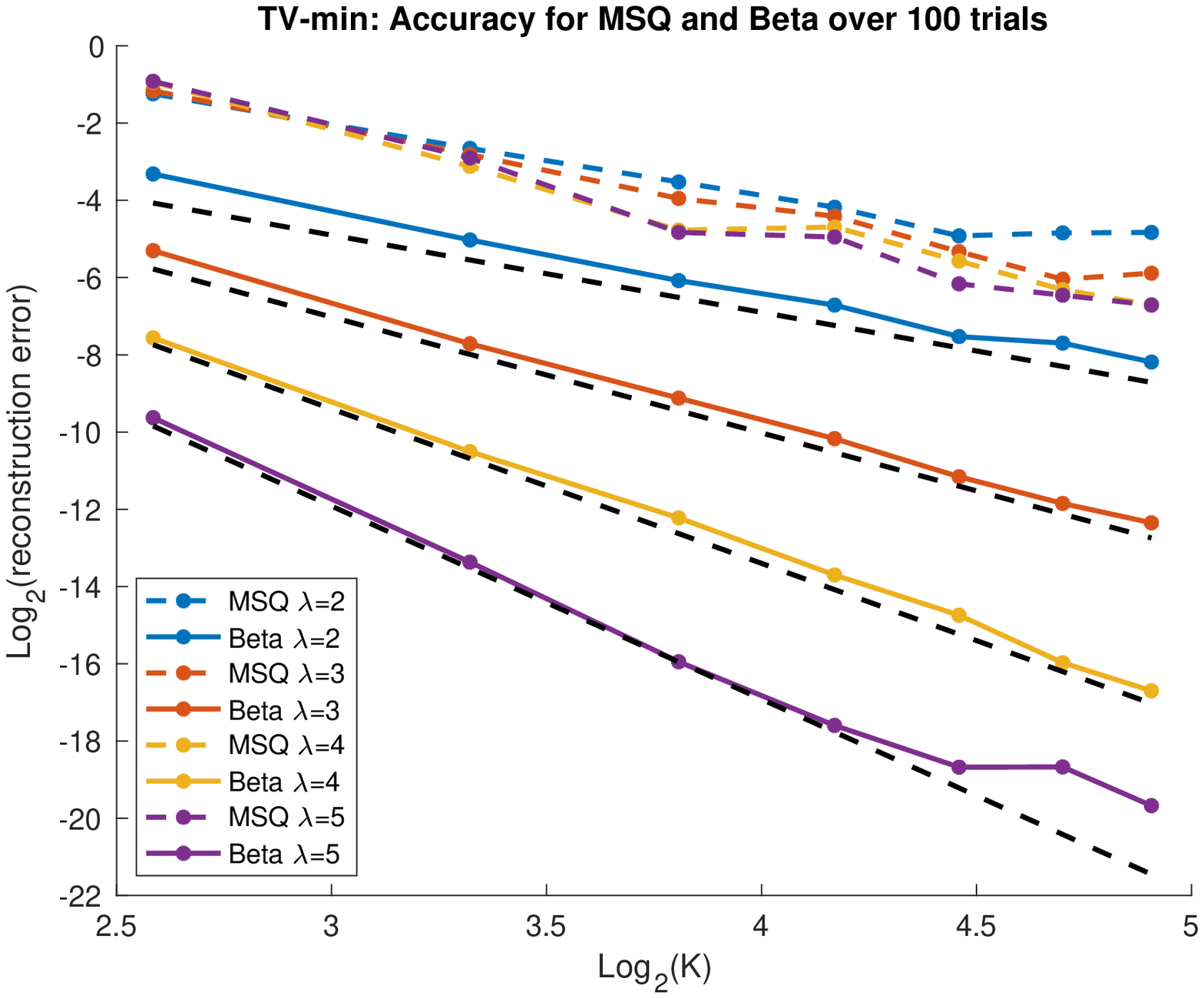}
		\caption{Plot of the amplitude error for TV-min, using MSQ and $\beta$-quantization, for various choices of $K$ and $\lambda$. The dashed black lines are the graphs of $0.75\cdot  \lambda^{3/2} K^{-\lambda}$.}
		\label{fig:exper1}
	\end{subfigure}
	
	\begin{subfigure}{\textwidth}
		\centering
		\includegraphics[width=0.5\textwidth]{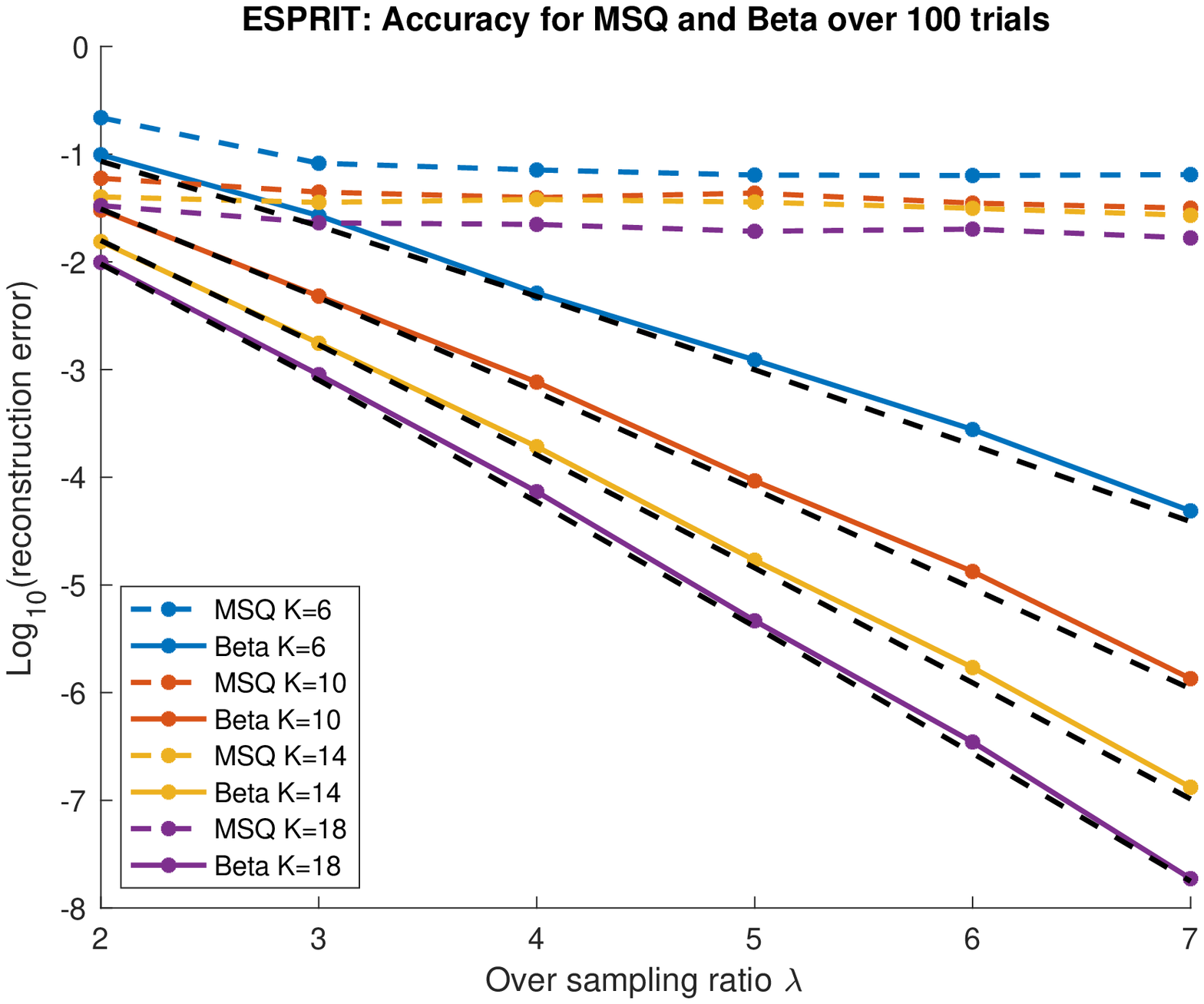}
		\hspace{-2em}
		\includegraphics[width=0.5\textwidth]{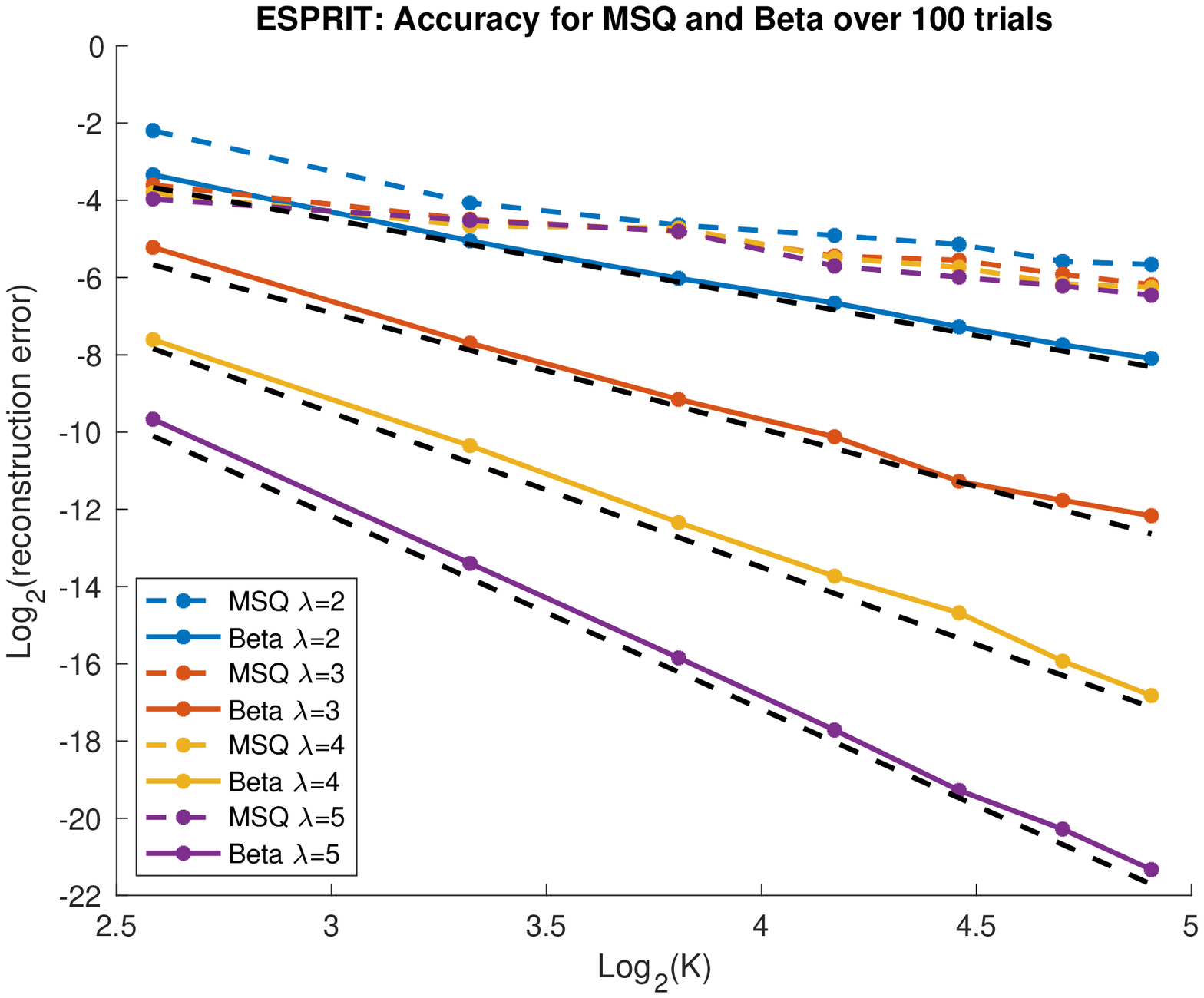}
		\caption{Plot of the amplitude error for ESPRIT, using MSQ and $\beta$-quantization, for various choices of $K$ and $\lambda$. The dashed black lines are the graphs of $1.6\cdot  \lambda K^{-\lambda}$.}
		\label{fig:exper2}
	\end{subfigure}

\end{figure}

We first concentrate on TV-min. Since is not straightforward to numerically compute $\calE_\Lip$, we instead measure the error in terms of $\calE^m_1$, which is the dominant error terms in Theorem \ref{thm:quanconvex}, and also see Theorem \ref{thm:quanconvex2}. When either $K$ or $\lambda$ is sufficiently large, for some $C>0$ independent of $K$ and $\lambda$, satisfies
\[
\log(\text{distortion})\leq \log C+\frac{3}{2}\log \lambda - \frac{\lambda}{2}\log K. 
\]
To obtain the above expression, note that $M=\lambda m=O(\lambda)$ since $m$ is fixed. If we view $K$ as fixed, then the log accuracy is essentially proportional to $\lambda$ with slope $-\log K$. On the other hand, if we view $\lambda$ as a fixed constant, then the log accuracy is proportional to $\log K$ with slope $-\lambda/2$ and intercept $\log C+3(\log\lambda)/2$.

The numerical simulations shown in Figure \ref{fig:exper1} indicate that the amplitude error for TV-min, combined with our encoding-decoding scheme, is not consistent with the theory when $K^{-\lambda}$ is greater than about $10^{-7}$. In this regime, the numerical results indicate that the actual reconstruction accuracy for TV-min using our encoding-decoding scheme should be $O(\lambda^{3/2}K^{-\lambda})$. On the other hand, when $K^{-\lambda}$ is smaller than about $10^{-7}$, the reconstruction error seems to saturate. One explanation is TV-min undergoes a phase transition, where perhaps asymptotically, the correct dependence is $O(\lambda^{3/2}K^{-\lambda/2})$. Another possibility is that when $K^{-\lambda}$ is small, the feasible set in TV-min has radius $\epsilon_V$, so optimizing over this region might lead to numerical errors. 

We proceed to discuss the numerical simulations for ESPRIT. While Theorem \ref{thm:quanesprit} is stated in terms of the $\calE_{\infty,2}$ error, to keep the subsequent experiments consistent with that of the TV-min ones, we will numerically calculate the $\ell^2$ amplitude error, which is the dominant term in Theorem \ref{thm:quanesprit}. When either $K$ or $\lambda$ is sufficiently large, for some $C>0$ independent of $K$ and $\lambda$, should satisfy the relationship
\[
\log(\text{distortion})
\leq \log C+2\log \lambda - \lambda\log K.
\]
To obtain the above expression, note that $M=\lambda m=O(\lambda)$ since $m$ is fixed. Again, we can interpret this inequality from two perspectives. If we view $K$ as a fixed constant, then the log accuracy is essentially proportional to $\lambda$ with slope $\log K$. On the other hand, if we view $\lambda$ as a fixed constant, then the log accuracy is proportional to $\log K$ with slope $-\lambda$ and intercept $\log C+2\log \lambda$. The simulations shown in Figure \ref{fig:exper2} suggest that the true distortion for our encoding-decoding method combined with ESPRIT should behave as $O(\lambda K^{-\lambda})$.

\subsection{Pushing the limits of quantization/extreme cases}

There is significant interest in coarse quantization schemes. Since Fourier measurements are inherently complex-valued, it is natural to quantize the real and imaginary parts separately, as we have done in this paper. Thus, the smallest alphabet we consider contains $K^2= 4$ elements, which amounts to one-bit quantization for the real and imaginary parts. It is possible to reduce the alphabet from four to three levels by employing a triangular lattice (e.g. \cite{chou2017distributed}), but we do not consider this situation here. 

When $K$ is small, the only way in which we can achieve reasonable reconstruction error is to have reasonably large over-sampling ratio $\lambda$, in order to exploit any redundancy in the measurements. Figure \ref{fig:coarse1} shows the reconstruction accuracy for our encoding-decoding schemes that uses TV-min and ESPRIT, for small values of $K$ as a function of $\lambda$. The numerical simulations show that our method can handle coarsely quantized measurements provided that the over-sampling is sufficiently large. In the important $K=2$ case, the results show that the reconstruction error decays as $O(2^{-\lambda})$. 

Figure \ref{fig:coarse2} shows the reconstruction error when $\lambda$ is relatively small. As expected, without highly redundant information, the reconstruction accuracy is poor. It appears that TV-min provides better results in the small $\lambda$ regime, which is consistent with our theory -- recall that Theorem \ref{thm:quanconvex} does not require an apriori upper bound on $K^{-\lambda}$, while Theorem \ref{thm:quanesprit} does. The reason is that for ESPRIT, the amplitudes are reconstructed using the pseudo-inverse $\Phi_M(\tilde T)^\dagger$, and if $\tilde T$ is not identified correctly, then the pseudo-inverse provides unpredictable results.

\begin{figure}[!]
	\caption{Plot of the reconstruction error for TV-min and ESPRIT for $K=2,3,4$ as a function of $\lambda$. The dashed black lines are visual guides and are the graphs of $1.6\cdot \lambda K^{-\lambda}$.}
	\begin{subfigure}{0.5\textwidth}
		\centering
		\includegraphics[width=\textwidth]{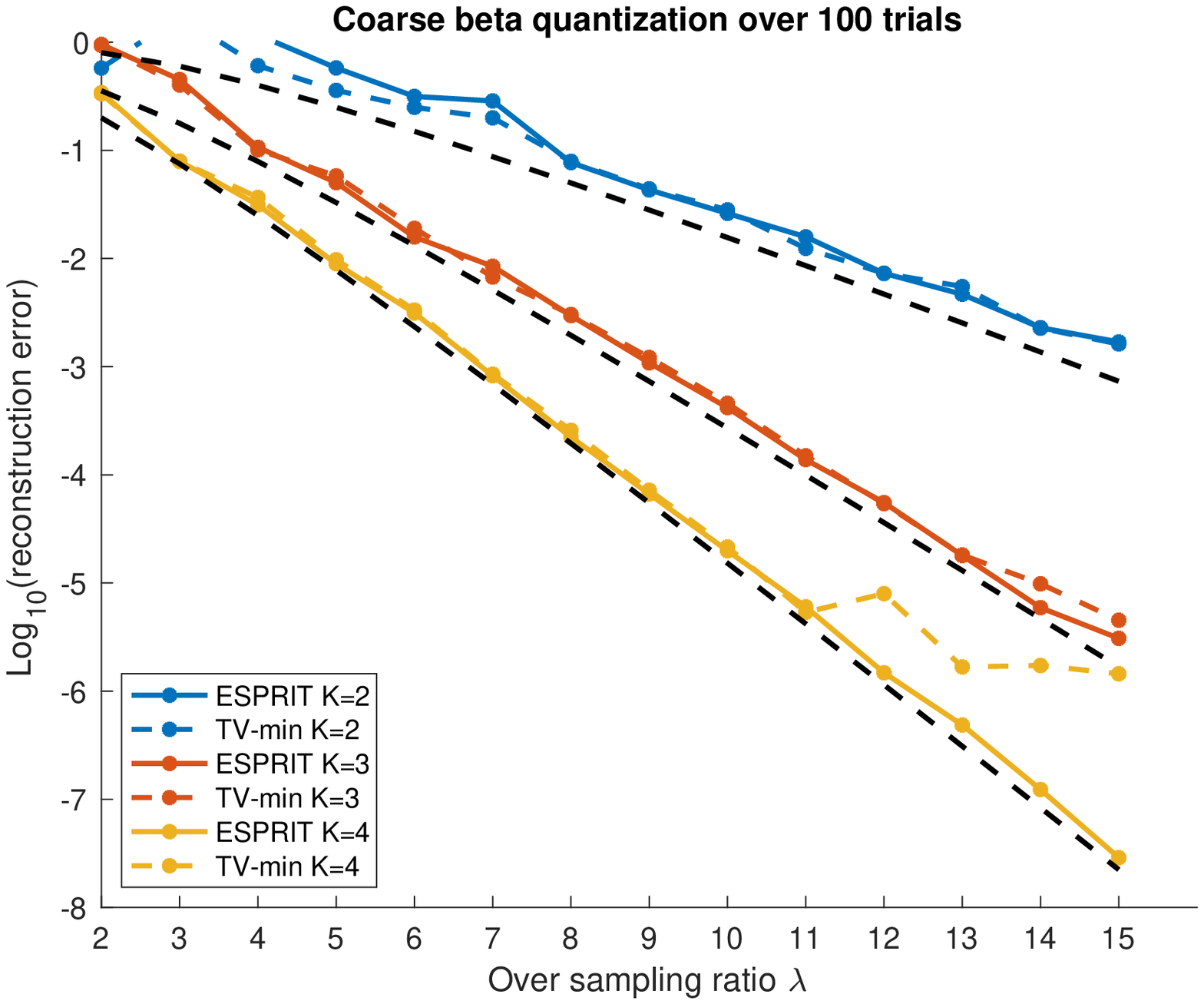}
		\caption{Large $\lambda$ regime}
		\label{fig:coarse1}
	\end{subfigure}
	\begin{subfigure}{0.5\textwidth}
		\centering
		\includegraphics[width=\textwidth]{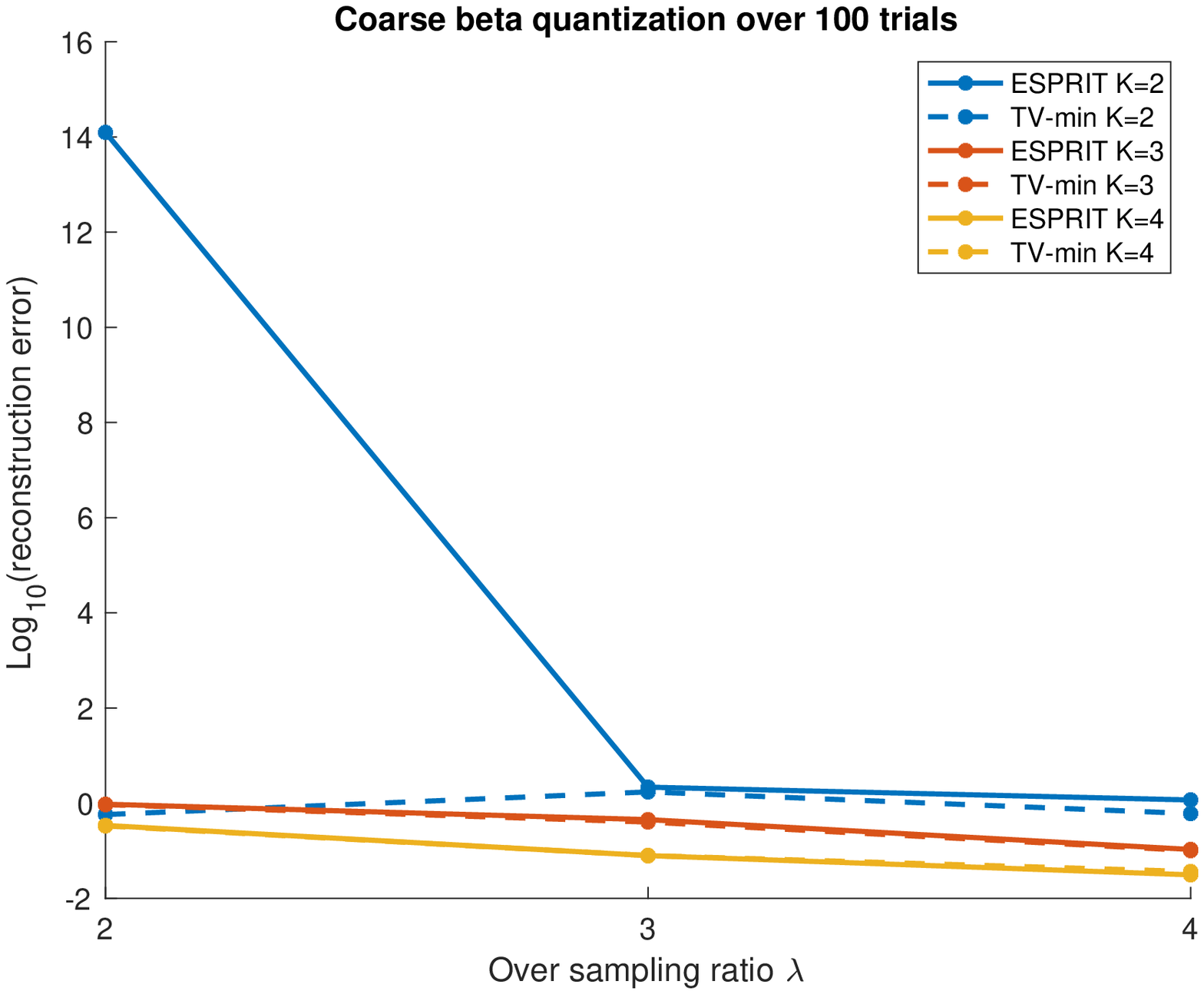}
		\caption{Small $\lambda$ regime}
		\label{fig:coarse2}
	\end{subfigure}
	
\end{figure}

\section{Fundamental limits of MSQ}
\label{sec:lower}

We present lower bounds for the reconstruction error from memoryless scalar quantized finite frame coefficients that are universal over the alphabet and sampling scheme. We discuss its consequences to super-resolution quantization after the results are presented. 

Consider a spanning set of vectors $\{\psi_j\}_{j=1}^M\subset\C^S$ (e.g. a frame) and the $M\times S$ complex matrix $\Psi$ whose $j$-th row is the vector $\psi_j^*$. For any $z\in\C^S$, the frame coefficients of $z$ are $\Psi z=\{\<\psi_j,z\>\}_{j=1}^M$, and it is possible to reconstruct $z$ from $\Psi z$. In this section, we use $\<\cdot,\cdot\>$ to denote the complex inner product on $\C^S$ with the convention that $\langle x, y\rangle = x^*y$. 

Let $\calA\subset\C$ be an alphabet of cardinality $L$. Here, $\calA$ is arbitrary and is not necessarily a Cartesian alphabet like the one defined in \eqref{eq:alphabetdelta}. We let $Q\colon \C^M\to \calA^M$ denote MSQ, which rounds each entry of $z\in\C^M$ to its nearest element in $\calA$ with respect to the standard metric $|\cdot|$ on $\C$. If there is a tie, pick one of the closest elements arbitrarily. 

We would like to answer the question: what is the best possible reconstruction error given only the memoryless scalar quantized finite frame coefficients $Q(\Psi z)$? To make this question precise, let $B_S$ denote the unit $\ell^2$ ball of $\C^S$ and we define the quantity,
\[
\calE(\Psi,\calA)
:=\inf_{D\colon\calA^M\to\C^S} \sup_{z\in B_S} \|z-D(Q(\Psi z))\|_2. 
\]
This quantity describes the error incurred by the best possible decoder, where the error is measured uniformly over $B_S$. The following theorem provides a lower for $\calE(\Psi,\calA)$ that only depends on $L,M,S$. It asserts that no matter which quantization alphabet and frame are used, MSQ suffers a fundamental lower bound.

\begin{theorem}
	\label{thm:lower1}
	For any $M\times S$ frame $\Psi$ and complex alphabet $\calA\subset\C$ of cardinality $L$,
	\[
	\calE(\Psi,\calA)
	\geq \min\(\frac{1}{4},\frac{S}{3{\rm e}LM}\).
	\]
\end{theorem}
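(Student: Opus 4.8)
The plan is to reduce the statement to a counting problem --- how many distinct quantization patterns the map $z\mapsto Q(\Psi z)$ can produce on $B_S$ --- and then to convert a bound on this count into a reconstruction lower bound by a volume (covering) argument. Let $N$ denote the number of distinct values taken by $Q(\Psi z)$ as $z$ ranges over $B_S$. For any decoder $D$, the composition $z\mapsto D(Q(\Psi z))$ takes at most $N$ values, say $c_1,\dots,c_N\in\C^S$. If $\sup_{z\in B_S}\|z-D(Q(\Psi z))\|_2\le \calE_D$, then the balls of radius $\calE_D$ about $c_1,\dots,c_N$ cover $B_S$, and comparing Lebesgue measures in $\R^{2S}\cong\C^S$ gives $\vol(B_S)\le N\,\calE_D^{2S}\,\vol(B_S)$, hence $\calE_D\ge N^{-1/(2S)}$. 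Taking the infimum over $D$ yields $\calE(\Psi,\calA)\ge N^{-1/(2S)}$, so it remains only to bound $N$ from above.

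To bound $N$, I would view each quantization pattern as a choice, for every coordinate $j$, of the Voronoi cell of $\calA$ containing $\langle\psi_j,z\rangle$. The crucial point is that although $\calA$ has $\binom{L}{2}$ perpendicular bisectors, its planar Voronoi diagram is dual to a Delaunay triangulation and therefore has at most $3L-6$ edges; consequently all of its cell boundaries lie on at most $3L-6$ lines in $\C$. Pulling these lines back through the $M$ linear functionals $z\mapsto\langle\psi_j,z\rangle$ produces an arrangement of at most $3LM$ real affine hyperplanes in $\R^{2S}$, and $Q(\Psi z)$ is constant on each region of this arrangement. Thus $N$ is at most the number of regions of an arrangement of $H\le 3LM$ hyperplanes in $\R^{2S}$, which by the classical bound is $\sum_{i=0}^{2S}\binom{H}{i}$.

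Combining the two steps, when $3LM\ge 2S$ the Sauer--Shelah-type estimate $\sum_{i=0}^{n}\binom{H}{i}\le (eH/n)^{n}$ with $n=2S$ gives $N\le(3eLM/(2S))^{2S}$, hence $\calE(\Psi,\calA)\ge 2S/(3eLM)\ge S/(3eLM)$, and in this regime $S/(3eLM)\le 1/(2e)<1/4$ so the minimum is attained by the second term; when $3LM<2S$ the same arrangement count degenerates to $N\le 2^{3LM}<2^{2S}$, forcing $\calE(\Psi,\calA)>1/2$, which dominates $1/4$. In both regimes one recovers the claimed lower bound $\min(1/4,\,S/(3eLM))$. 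I expect the main obstacle to be the combinatorial step: obtaining a hyperplane count that is linear rather than quadratic in $L$ is exactly what yields $L$ (and not $L^2$) in the denominator, and this forces the use of the Delaunay/Voronoi edge bound in place of a naive pairwise-bisector count; some care is then needed to track the constants through the arrangement region bound and the Sauer--Shelah inequality so that the final constant is no larger than $3e$.
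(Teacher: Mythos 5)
Your proposal is correct and follows essentially the same route as the paper's proof: a volumetric covering argument reducing the bound to counting quantization patterns, the Voronoi/Delaunay edge bound $3L-6$ to get at most $3LM$ hyperplanes in $\R^{2S}$, and the classical region-count plus a Sauer--Shelah-type estimate. The only differences are cosmetic bookkeeping (you split cases at $3LM \geq 2S$ rather than $4S$ and apply $\sum_{i\le n}\binom{H}{i}\le (eH/n)^n$ directly, which even yields a slightly better constant than the paper's $(2S+1)\binom{3LM}{2S}$ route).
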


\begin{proof}
	Our strategy for analyzing this quantity is to recast this problem as a discrete geometric one. For each $q\in\calA^M$, we define the quantization cell 
	\[
	C(q):=
	\{z\in B_S \colon Q(\Psi z)=q\}. 
	\]
	The significance of $C(q)$ is that this contains all points in $B$ that are mapped to $q$, so any decoder produces the same output for any $z,w\in C(q)$. Then
	\[
	\calE(\Psi,\calA)
	\geq \frac{1}{2} \sup_{q\in\calA^M} \sup_{w,z\in C(q)} \|w-z\|_2.
	\]
	Let $\rho>0$ be the smallest real number such that each quantization cell $C(q)$ can be covered by a ball of radius $\rho$. Then $\calE(\Psi,\calA)\geq \rho$ and the remainder of this proof focuses on lower bounding $\rho$ by estimating the total number of quantization cells, which we denote by $N_C$, and a volumetric argument.  
	
	The alphabet $\calA\subset\C$ induces a {\it Voronoi partition} of $\C$. That is, there exists a collection of sets $\{V(a)\colon a\in\calA\}\subset\C$ called {\it Voronoi cells} that are pairwise disjoint except on sets of Lebesgue measure zero, and $w\in V(a)$ if and only if $|w-a|=\min_{b\in\calA} |w-b|$. The boundary of each $V(a)$ is a polygon with $E(a)$ edges, and the following is a well known upper bound for the total number of edges induced by the partition,
	\begin{equation}
	\label{eq:edges}
	\sum_{a\in\calA} E(a)\leq
	\begin{cases}
	\ 3L-6 &\text{if } L\geq 3,\\
	\ 1 &\text{if } L=2. 
	\end{cases}
	\end{equation}

	Since each Voronoi cell $V(a)$ is a polygon with $E(a)$ edges, membership in $V(a)$ can characterized by $E(a)$ affine inequalities. There exist real numbers $\alpha_k(a),\beta_k(a),\gamma_k(a)$ such that $w\in V(a)$ if and only if
	\[
	\Re(w)\alpha_k(a)+\Im(w)\beta_k(a) \leq \gamma_k(a), \quad k=1,\dots,E(a).
	\]
	From here, we see that $z\in C(q)$ if and only if $\<\psi_j,z\>\in V(q_j)$ for $j=1,\dots,M$, which is equivalent to  
	\begin{eqnarray*}	\Re(\<\psi_j,z\>)\alpha_k(q_j)+\Im(\<\psi_j,z\>)\beta_k(q_j) \leq \gamma_k(q_j), \quad k=1,\dots,E(q_j), \quad j=1,\dots,M.
	\end{eqnarray*}
	Let $\psi_j^R,\psi_j^I\in\R^S$ be the real and imaginary parts of $\psi_j\in\C^S$ respectively, and let $x,y\in\R^S$ be the real and imaginary parts of $z\in\C^S$ respectively. The above is equivalent to
	\[
	\Big\< 
	\begin{bmatrix}
	\psi_j^R\alpha_k(q_j) + \psi_j^I\beta_k(q_j) \\
	-\psi_j^I \alpha_k(q_j)+\psi_j^R \beta_k(q_j)
	\end{bmatrix},
	\begin{bmatrix}
	x \\
	y
	\end{bmatrix}
	\Big\>
	\leq \gamma_k(q_j), 
	\quad k=1,\dots,E(q_j), \quad j=1,\dots,M.
	\]
	We interpret this is a collection of hyperplane inequalities in $\R^S\times \R^S$. We have proved that all $N_C$ quantization cells can be determined by at most $N_P$ hyperplanes, where
	\begin{equation*}
	N_P
	\leq M \sum_{a\in\calA} E(a) 
	\leq 3LM,
	\end{equation*}
	and the last inequality follows from \eqref{eq:edges}. 
	
	It is known that the maximum number of cells determined by $N_P$ hyperplanes in $\R^S\times\R^S$ is $\sum_{j=0}^{2S} {{N_P}\choose{j}}$. We consider two separate cases.
	\begin{enumerate}[(a)]
		\item 
		If $3LM<4S$, then we have $N_C\leq 2^{N_P}\leq 2^{4S}=4^{2S}$.
		\item 
		If $3LM\geq 4S$, then 
		\[
		N_C
		\leq \sum_{j=0}^{2S} {{N_P}\choose{j}}
		\leq (2S+1) {{3LM}\choose{2S}}
		\leq (2S+1) \(\frac{3{\rm e} LM}{2S}\)^{2S}.
		\]
		 
	\end{enumerate}
	
	Since there are $N_C$ quantization cells each covered by a ball of radius $\rho$, we have 
	\[
	\vol(B_S)
	\leq N_C \rho^{2S} \vol(B_S). 
	\]
	Combining this inequality, together with our earlier upper bound for $N_C$ and the inequality $(2S+1)^{1/(2S)}\leq 2$, completes the proof.
\end{proof}

To see how the theorem pertains to super-resolution, suppose $\mu$ has $S$ atoms and an oracle provides us the support of $\mu$ for free. Then $\calF_M\mu=\Phi_M(T) a$, where $a\in\C^S$ and $T$ denote the amplitudes and support of $\mu$ respectively, and $\Phi_M$ is the Fourier matrix defined in \eqref{eq:vander}. Notice that we access to $\Phi_M(T)$ due to the oracle giving us $T$ and $\Phi_M(\T)\in\C^{M\times S}$ has linearly independent columns because it is a Vandermonde matrix with unique nodes. The theorem tells us that the best distortion achievable using memoryless scalar quantized Fourier measurements, regardless of the reconstruction algorithm and even with the help of an oracle, is lower bounded by $O(M^{-1} L^{-1})$. 

On the other hand, we showed that for two popular super-resolution algorithms, the reconstruction accuracy from MSQ measurements is upper bounded $O(\sqrt M K^{-1})$ where $K^2$ is the total number of levels. We provide two explanations for the gap between upper and lower bounds. First is that the support estimation part of super-resolution is typically the most difficult part, which was circumvented by the oracle. In this sense, the lower bound has a significant advantage over the upper bound. Second is that Theorem \ref{thm:lower1} applies to general complex alphabets, whereas for the upper bounds, we worked with conceptually simpler and more natural Cartesian alphabets. The following result address the second point.

\begin{theorem}
	\label{thm:MSQ2}
For any $M\times S$ frame $\Psi$ and complex alphabet $\calA\subset\C$ of the form $\calA=\calA_\R+i\calA_\R$, where $\calA_\R\subset\R$ and has cardinality $K$,
\[
\calE(\Psi,\calA)
\geq \min\(\frac{1}{4},\frac{S}{2{\rm e}KM}\).
\]	
\end{theorem}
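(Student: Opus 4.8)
The plan is to follow the proof of Theorem~\ref{thm:lower1} almost verbatim, replacing only the crude edge count $3L-6$ (valid for an arbitrary $L$-point alphabet) with a sharper count that exploits the product structure of the Cartesian alphabet $\calA = \calA_\R + i\calA_\R$. As before, for each $q\in\calA^M$ I set $C(q) := \{z\in B_S : Q(\Psi z)=q\}$, note that $\calE(\Psi,\calA)\geq \rho$ where $\rho$ is the smallest radius such that every $C(q)$ is covered by a ball of radius $\rho$, and I lower bound $\rho$ by the same volumetric inequality $\vol(B_S)\leq N_C\,\rho^{2S}\vol(B_S)$, i.e. $\rho\geq N_C^{-1/(2S)}$, where $N_C$ is the number of nonempty cells. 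So the entire task reduces to bounding $N_C$, and this is where the Cartesian structure is used.

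The key observation is that when $\calA = \calA_\R + i\calA_\R$, rounding a complex number to its nearest element of $\calA$ decouples into independently rounding the real and imaginary parts to $\calA_\R$. Hence the Voronoi partition of $\C$ induced by $\calA$ is not a generic planar subdivision with up to $3K^2-6$ edges, but a rectangular grid whose boundaries lie on only $2(K-1)$ lines: the $K-1$ vertical lines $\{\Re w = c_i\}$ and the $K-1$ horizontal lines $\{\Im w = d_i\}$, where $c_1<\cdots<c_{K-1}$ and $d_1<\cdots<d_{K-1}$ are the consecutive midpoints of the $K$ points of $\calA_\R$. This is precisely where the product structure buys a factor of $K$ over Theorem~\ref{thm:lower1}. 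Translating to frame coefficients, $z\in C(q)$ if and only if for every $j$ the number $\langle\psi_j,z\rangle$ lies in the grid cell $V(q_j)$, which is determined by where $\Re\langle\psi_j,z\rangle$ and $\Im\langle\psi_j,z\rangle$ fall relative to the thresholds $c_i,d_i$. Writing $z=x+iy$ with $x,y\in\R^S$, each equation $\Re\langle\psi_j,z\rangle = c_i$ or $\Im\langle\psi_j,z\rangle = d_i$ is a single affine hyperplane in $\R^S\times\R^S=\R^{2S}$, and all cells $C(q)$ arise as regions of an arrangement of at most
\[
N_P \leq M\cdot 2(K-1) \leq 2KM
\]
distinct hyperplanes, whose number of regions is at most $\sum_{j=0}^{2S}\binom{N_P}{j}$, exactly as before.

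From here the argument is identical to Theorem~\ref{thm:lower1}. If $N_P<4S$ then $N_C\leq 2^{N_P}\leq 2^{4S}=4^{2S}$, giving $\rho\geq 1/4$; otherwise $2S\leq N_P/2$ and
\[
N_C \leq (2S+1)\binom{N_P}{2S} \leq (2S+1)\Big(\frac{{\rm e}N_P}{2S}\Big)^{2S} \leq (2S+1)\Big(\frac{{\rm e}KM}{S}\Big)^{2S},
\]
so that $\rho\geq N_C^{-1/(2S)}\geq (2S+1)^{-1/(2S)}\,\tfrac{S}{{\rm e}KM}\geq \tfrac{S}{2{\rm e}KM}$, using $(2S+1)^{1/(2S)}\leq 2$. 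Combining the two cases yields the claimed bound. I do not anticipate a genuine obstacle here: the real difficulty was already dispatched in Theorem~\ref{thm:lower1}, and the only new ingredient is the elementary but decisive fact that a Cartesian alphabet yields a grid with $O(K)$ rather than $O(K^2)$ boundary lines. The single point requiring care is the hyperplane bookkeeping -- one must count the $2(K-1)$ shared thresholds per coordinate, not the inequalities per individual cell -- but this only sharpens the constant and does not affect the structure of the argument.
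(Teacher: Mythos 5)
Your proof is correct and takes essentially the same approach as the paper's: both reduce the Cartesian case to the hyperplane-arrangement argument of Theorem~\ref{thm:lower1}, replacing the generic Voronoi edge count by the $2(K-1)M\leq 2KM$ grid hyperplanes and concluding via the same volumetric bound with $(2S+1)^{1/(2S)}\leq 2$. Your case split on $N_P\geq 4S$ versus $N_P<4S$ is, if anything, slightly more careful than the paper's split on $2KM$ versus $4S$, but the argument is otherwise identical.
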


\begin{proof}
The starting point and overall strategy of this proof is similar to that of Theorem \ref{thm:lower1}. There we proved that if $N_C$ is the number of quantization cells induced by the alphabet $\calA$, then
\[
\calE(\Psi,\calA)
\geq N_C^{-1/(2S)}.
\]
Since $\calA$ is Cartesian by assumption, we see that all Voronoi cells $\{V(a)\colon a\in\calA\}$ can be specified by $K-1$ vertical and $K-1$ horizontal hyperplanes. Thus, if $N_P$ denotes the number of hyperplanes  in $\R^S\times\R^S$ necessary to specify all quantization cells, then 
\[
N_P
=2(K-1)M\leq 2KM.
\]
The maximum number of cells determined by $N_P$ hyperplanes is $\sum_{j=0}^{2S} {{N_P}\choose{j}}$. Repeating a similar argument as in Theorem \ref{thm:lower1}, we have
\[
N_C
\leq 
\begin{cases}
\ \displaystyle (2S+1) \(\frac{{\rm e}KM}{S} \)^{2S} &\text{if } 2KM \geq 4S, \\
\ 4^{2S} &\text{if } 2KM < 4S. 
\end{cases}
\]
Combining these observations and using that $(2S+1)^{1/(2S)}\leq 2$ completes the proof.
\end{proof}

In the context of super-resolution, this theorem demonstrates that even if the support of $\mu$ is provided to us by an oracle, reconstruction from its memoryless scalar quantized Fourier measurements on a Cartesian complex alphabet of cardinality $K^2$ incurs error of at least $O(M^{-1}K^{-1})$ regardless of how the alphabet is spaced and what decoder is used. 

\commentout{
\subsection{Lower bounds for noise-shaping}

Here we discuss the fundamental limitations of noise-shaping quantization. Let $\calA$ be an arbitrary alphabet with $L$ elements and $Q\colon\C^M\to\calA^M$ that maps each $x\in\C^M$ to the closest point in $\calA$. Suppose that the decoder is provided $q=Q(\calF_M\mu)$ for some $\mu$ and an oracle provides the decoder with the support set $T$. In this case, the remaining job of the decoder is to reconstruct the amplitudes of $\mu$ from $q$ and $T$. Since the decoder has access to $T$, it has knowledge of the measurement matrix, so reconstructing the amplitudes involves inverting a system of $M$ equations with $S$ unknowns. It  follows from \cite{chou2016distributed} that, under the assumption $\|\mu\|_{\TV}\leq 1$, the best possible reconstruction accuracy in the $\ell^1$ norm (the so called analysis distortion) is $O(L^{-M/S})$. 

Let us convert this bound into a different form that will be easier to compare with later on. Consider the alphabet $\calA_K$ defined in \eqref{eq:alphabetdelta}, and so $L=K^2$. For any fixed $S$, consider a measure $\mu\in\calM(\T,S)$ such that $\Delta=1/S$. Let $\lambda$ be an integer such that $M/\lambda \geq 4/\Delta=4S$, and note that $\lambda$ satisfies the condition \eqref{eq:lambda}. Without access to the amplitudes of $\mu$, the best possible reconstruction accuracy is $O(K^{-8\lambda})$. This bound could be improved if sharper theoretical results for the existing super-resolution algorithms are available. Such a result would allow us to swap out the $4$ appearing in condition \eqref{eq:lambda} with a smaller constant. For convex methods, it is conjectured that the optimal separation is $2/M$, and if that were true, the constant $4$ can be replaced with $2$, and the oracle reconstruction accuracy would be $O(K^{-4\lambda})$. 
}

\section*{Acknowledgments} 

Weilin Li gratefully acknowledges support from the AMS--Simons Travel Grant. 
 
\bibliography{QuanSRbib}
\bibliographystyle{plain}

\end{document}